	\newcommand{\removelatexerror}{\let\@latex@error\@gobble}
\tikzset{pic shift/.store in=\shiftcoord,
	pic shift = {(0,0)},
	pics/mySensor/.style args={scale #1 RFrange #2 Number #3 Opacity #4}{
		code={		
		\tikzset{cyl/.style={aspect=2,minimum height=0.1cm,minimum width=1.5cm,cylinder,draw=black,line width=2,shape border rotate=90,cylinder uses custom fill, cylinder body fill=blue!10}}			
			
			\begin{scope}[shift={\shiftcoord},scale=#1, every node/.append style={scale=#1}]
			\node[cyl] (-cyl) {};
			\ifthenelse{#3=0}{}{
			\node at ([yshift=-1]-cyl) {\scalefont{1.2} \textbf{#3}};}
			\draw[blue!70!black,very thick] (0,0.4) -- (0.75,0.4);
			\draw[fill=blue!70!black,draw=blue] (-0.1,0.5) -- (0.3,0.5) -- (0.1,0.3) -- (-0.3,0.3) -- cycle ;
			\foreach \i in {1,...,3}{
				\draw[blue!70!black,very thick] (-0.05+\i*0.1,0.55) -- (-0.35+\i *0.1,0.25);
			}
			 
			\draw[blue!70!black,line width=4,line cap=round] (0.75,0.4) -- (0.75,2.5-1);
			
			\draw[fill=blue!70!black,blue!70!black,line width=4] (0.75,1.5) -- (0.75+0.29,1.5+0.5) -- (0.75-0.29,1.5+0.5) -- cycle;
			
			\ifthenelse{#2=0}{}{
			\begin{scope}[on background layer]
			\draw[very thick,red!40!green,fill=red!40!green,opacity=#4] (0,0) ellipse ({#2*2} and #2);
			\draw[line width=1mm,dashed,black] (0,0) ellipse ({#2*2} and #2);
			\end{scope}
			}	
			\end{scope}
			
			}
	}
}
\tikzset{pic shift/.store in=\shiftcoord,
	pic shift = {(0,0)},
	pics/myManifold/.style args={scale #1}{
		code={	
			\begin{scope}[shift={\shiftcoord},scale=#1, every node/.append style={scale=#1}]

			\pgfmathsetmacro{\px}{-0.0682}
			\pgfmathsetmacro{\py}{-0.2079}
			\pgfmathsetmacro{\pz}{0.9758}
%
%
        
        	\begin{axis}[hide axis]
			\addplot3[samples=21,
            	domain=-20:20,
            	y domain=70:110, 
            	colormap/greenyellow,           	
				scatter/use mapped color={
					draw=mapped color,
					fill=mapped color
				},				
				surf,
				shader=faceted interp,opacity = 1, 
				z buffer=sort,draw=green!20]({sin(x)*sin(y)},{cos(y)}, {cos(x)*sin(y)});	
			\addplot3[samples=2,
            	domain=-0.2:0.1,
            	y domain=-0.4:0.2,
            	colormap/blackwhite,           	
				scatter/use mapped color={
					draw=black,
					fill=none
				},
				surf,
				opacity = 0, 
				z buffer=sort,fill=white,draw=black,very thick]({x},{y}, {1-(\px/\pz)*x-(\py/\pz)*y});
				
			\end{axis}
			
			\end{scope}
			
			}
	}
}
\tikzset{pic shift/.store in=\shiftcoord,
	pic shift = {(0,0)},
	pics/myManifoldv2/.style args={scale #1 retraction #2 number #3}{
		code={	
			\begin{scope}[shift={\shiftcoord},scale=#1, every node/.append style={scale=#1}]

			\begin{scope}[on background layer]
				\path (0,0) pic {myManifold={scale 1.6}};
			\end{scope}
			\draw[black!80!green, very thick] (1.6,1.8) -- (4.8,1.8) -- (8.2,7.4) -- (5,7.4) -- cycle;
			
			\ifthenelse{#2=0}{}
			{
			\draw[black!50!red,thick,dashed]  (5.8,6.4) -- (5.8,5.2);
			\node[anchor=north,xshift=0.2cm] at (5.8,5.2) {$R_{\mathbf{Y}}(\mathbf{B})$};
			\fill[black!50!red] (5.8,5.2) circle [radius=2pt];
			\node[rectangle,draw=black!50!red,thick,inner sep=0.1cm] at (5.7,5.3) {};
			}
			\draw[black!50!red,very thick] (3.4,3.4) .. controls (3.65,4.8) and (5.4,5.2) .. (5.8,5.2);
			\node[black!50!red] at (3.8,3.4) {$\gamma(t)$};
			\draw[black!80!green,very thick,->] (3.8,4.25) -- (5.8,6.4);	
			\ifthenelse{#2=0}
			{
			\node[anchor=south west,xshift=-0.6cm,yshift=-0.2cm] at (5.8,6.4) {$\mathbf{B}=\gamma^\prime (t)\bigg|_{t=0}$};
			\draw[black!80!green,very thick,->] (0.2,3.6) -- (0.2,7);
			\node[anchor=east,xshift=-0.1cm] at (0.2,7) {$t\in\mathbb{R}$};

			\draw[black!80!green,very thick] (0.3,5.4) -- (0.1,5.4);
			\node[anchor=east] at (0.1,5.4) {0};
	
			\draw[black!80!green,thick,dotted] (0.2,5.4) -- (3.8,4.25);
			\draw[black!80!green,thick,dotted] (0.2,3.8) -- (3.5,3.8);
			\draw[black!80!green,thick,dotted] (0.2,6.4) -- (4.2,4.6);
			}
			{
			\node[anchor=south west,xshift=-0.6cm,yshift=-0.2cm] at (5.8,6.4) {$\mathbf{B}$};
			}
			\node[black] at (4.1,4.1) {$\mathbf{Y}$};
			\fill[black!50!red] (3.8,4.25) circle [radius=2pt];

			\node at (8.2,1) {\textcolor{black!70!green}{$\widetilde{\mathcal{Y}}$}};
			\node at (2.4,2.2) {$T_{\mathbf{Y}}\widetilde{\mathcal{Y}}$};
			
			\node at (4.8,0.4) {(#3)};
			\end{scope}
			
			}
	}
}
\newtheorem{theorem}{Theorem}
\newtheorem{lemma}[theorem]{Lemma}
\newcommand{\bi}{\begin{itemize}}
\newcommand{\ei}{\end{itemize}}
\newcommand{\be}{\begin{eqnarray}}
\newcommand{\ee}{\end{eqnarray}}
\newcommand{\efr}{\end{frame}}
\newcommand{\bc}{\begin{center}}
\newcommand{\ec}{\end{center}}
\newcommand{\diag}{\operatornamewithlimits{diag}}
\newcommand\Tstrut{\rule{0pt}{2.6ex}}       
\newcommand\Bstrut{\rule[-0.9ex]{0pt}{0pt}} 
\begin{document}

\newcommand{\matc}[2][ccccccccccccccccccc]{\left[
\begin{array}{#1}
#2\\
\end{array}
\right]}
\newcommand{\matr}[2][rrrrrrrrrrrrrrrrrrrrrrrr]{\left[
\begin{array}{#1}
#2\\
\end{array}
\right]}
\newcommand{\matl}[2][lllllllllllllllllll]{\left[
\begin{array}{#1}
#2\\
\end{array}
\right]}

\newcommand{\pt}[2]{P_{T_{\mathbf{Y}_{#1}}\widetilde{\mathcal{Y}}}(#2)}

\newcommand*\circled[1]{\tikz[baseline=(char.base)]{
  \node[shape=circle,draw,inner sep=2pt] (char) {#1};}}


{\Large \textbf{Notice:} This work has been submitted to the IEEE for possible publication. Copyright may be transferred without notice, after which this version may no longer be accessible.}

\clearpage


%
\title{Low-Rank Matrix Completion: A Contemporary Survey}

\author{\IEEEauthorblockN{Luong Trung Nguyen, Junhan Kim, Byonghyo Shim}

\IEEEauthorblockA{Information System Laboratory\\
Department of Electrical and Computer Engineering, Seoul National University\\
Email: \{ltnguyen,junhankim,bshim\}@islab.snu.ac.kr}
}


%


\maketitle

\begin{abstract}
As a paradigm to recover unknown entries of a matrix from partial observations, low-rank matrix completion (LRMC) has generated a great deal of interest. Over the years, there have been lots of works on this topic but it might not be easy to grasp the essential knowledge from these studies. This is mainly because many of these works are highly theoretical or a proposal of new LRMC technique. In this paper, we give a contemporary survey on LRMC. In order to provide better view, insight, and understanding of potentials and limitations of LRMC, we present early scattered results in a structured and accessible way. Specifically, we classify the state-of-the-art LRMC techniques into two main categories and then explain each category in detail. We next discuss issues to be considered when one considers using LRMC techniques. These include intrinsic properties required for the matrix recovery and how to exploit a special structure in LRMC design. We also discuss the convolutional neural network (CNN) based LRMC algorithms exploiting the graph structure of a low-rank matrix. Further, we present the recovery performance and the computational complexity of the state-of-the-art LRMC techniques. Our hope is that this survey article will serve as a useful guide for practitioners and non-experts to catch the gist of LRMC.
\end{abstract}


%
\IEEEpeerreviewmaketitle

\section{Introduction}
\label{sec:intro}
\PARstart{I}{n} the era of big data, the low-rank matrix has become a useful and popular tool to express two-dimensional information. One well-known example is the rating matrix in the recommendation systems representing users' tastes on products~\cite{netflix}. Since users expressing similar ratings on multiple products tend to have the same interest for the new product, columns associated with users sharing the same interest are highly likely to be the same, resulting in the low rank structure of the rating matrix (see Fig.~\ref{fig:fig007}). Another example is the Euclidean distance matrix formed by the pairwise distances of a large number of sensor nodes. Since the rank of a Euclidean distance matrix in the $k$-dimensional Euclidean space is at most $k+2$ (if $k = 2$, then the rank is 4), this matrix can be readily modeled as a low-rank matrix \cite{pal2010,luongITA,luong2019}. 


\begin{figure*} [t!]
\label{fig:fig007}
\centering    
\includegraphics[width=0.9\textwidth]{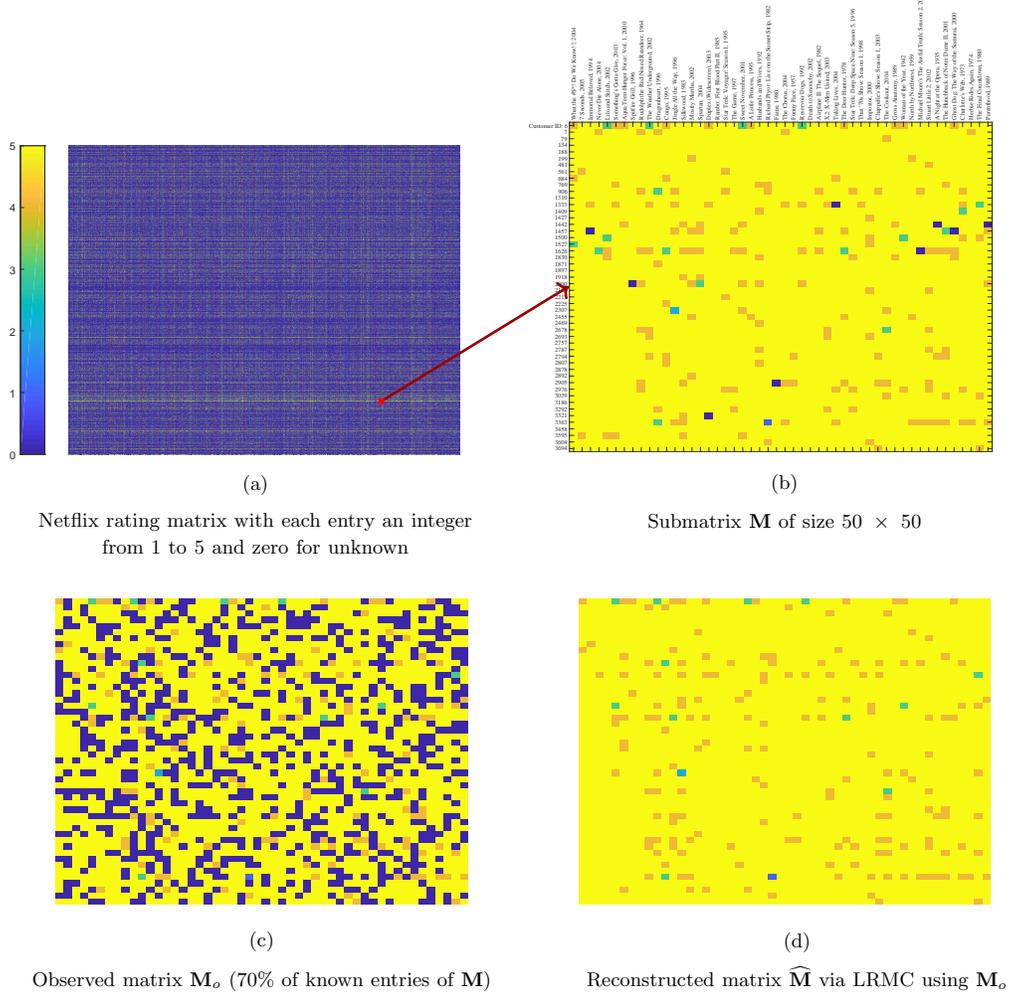}  
\caption {Recommendation system application of LRMC. Entries of $\widehat{\mathbf{M}}$ are then simply rounded to integers, achieving 97.2\% accuracy.} 
\end{figure*} 

One major benefit of the low-rank matrix is that the essential information, expressed in terms of degree of freedom, in a matrix is much smaller than the total number of entries. Therefore, even though the number of observed entries is small, we still have a good chance to recover the whole matrix. There are a variety of scenarios where the number of observed entries of a matrix is tiny. In the recommendation systems, for example, users are recommended to submit the feedback in a form of rating number, e.g., 1 to 5 for the purchased product. However, users often do not want to leave a feedback and thus the rating matrix will have many missing entries. Also, in the internet of things (IoT) network, sensor nodes have a limitation on the radio communication range or under the power outage so that only small portion of entries in the Euclidean distance matrix is available. 

When there is no restriction on the rank of a matrix, the problem to recover unknown entries of a matrix from partial observed entries is ill-posed. This is because any value can be assigned to unknown entries, which in turn means that there are infinite number of matrices that agree with the observed entries. As a simple example, consider the following $2\times 2$ matrix with one unknown entry marked ?
\be
\mathbf{M} = \matc{ 1 & 5 \\ 2 & ?}. 
\label{eq:eq001}
\ee
%
If $\mathbf{M}$ is a full rank, i.e., the rank of $\mathbf{M}$ is two, then any value except $10$ can be assigned to $?$. Whereas, if $\mathbf{M}$ is a low-rank matrix (the rank is one in this trivial example), two columns differ by only a constant and hence unknown element ? can be easily determined using a linear relationship between two columns (? = 10). This example is obviously simple, but the fundamental principle to recover a large dimensional matrix is not much different from this and the \textit{low-rank} constraint plays a central role in recovering unknown entries of the matrix.  

Before we proceed, we discuss a few notable applications where the underlying matrix is modeled as a low-rank matrix.

\begin{enumerate}
\item {\bf Recommendation system}: In 2006, the online DVD rental company Netflix announced a contest to improve the quality of the company's movie recommendation system. The company released a training set of half million customers. Training set contains ratings on more than ten thousands movies, each movie being rated on a scale from $1$ to $5$ \cite{netflix}. The training data can be represented in a large dimensional matrix in which each column represents the rating of a customer for the movies. The primary goal of the recommendation system is to estimate the users' interests on products using the sparsely sampled\footnote{Netflix dataset consists of ratings of more than 17,000 movies by more than 2.5 million users. The number of known entries is only about 1\%~\cite{netflix}.} rating matrix.\footnote{Customers might not necessarily rate all of the movies.} Often users sharing the same interests in key factors such as the type, the price, and the appearance of the product tend to provide the same rating on the movies. The ratings of those users might form a low-rank column space, resulting in the low-rank model of the rating matrix (see Fig. \ref{fig:fig007}). 


\begin{figure*} [t!]
\label{fig:fig1}
\centering    
\includegraphics[width=1.02\textwidth]{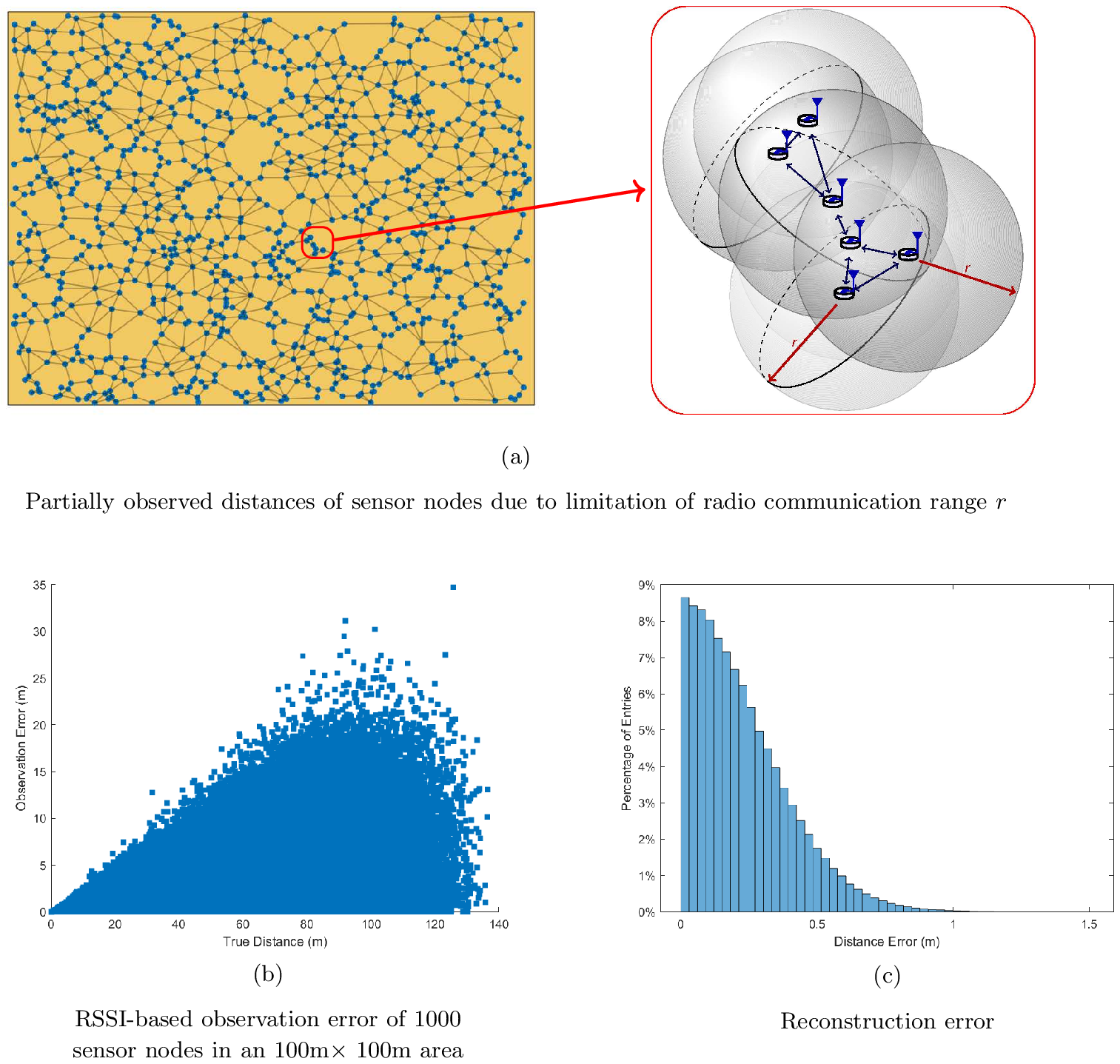} 
\caption {Localization via LRMC \cite{luong2019}. The Euclidean distance matrix can be recovered with 92\% of distance error below 0.5m using 30\% of observed distances.} 
\end{figure*}



\begin{figure*} [t!]
\label{fig:fig006}
\centering    
\includegraphics[width=0.95\textwidth]{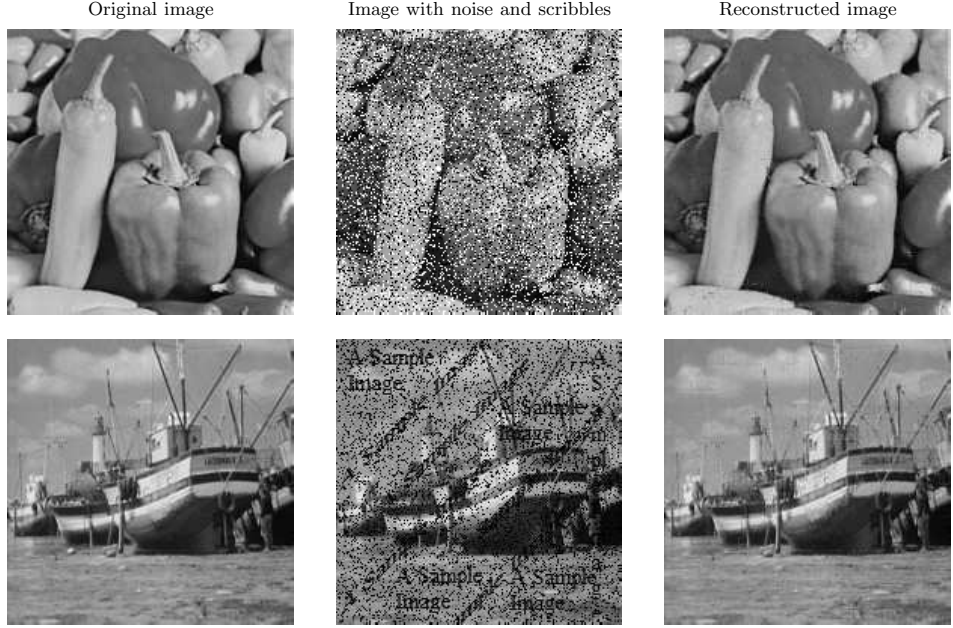}  
\caption {Image reconstruction via LRMC. Recovered images achieve peak SNR $\geq$ 32dB.} 
\end{figure*}


\item {\bf Phase retrieval}: 
The problem to recover a signal not necessarily sparse from the magnitude of its observation is referred to as the phase retrieval. Phase retrieval is an important problem in X-ray crystallography and quantum mechanics since only the magnitude of the Fourier transform is measured in these applications \cite{candes2015}. Suppose the unknown time-domain signal $\mathbf{m} = [ m_{0} \ \cdots \ m_{n-1} ]$ is acquired in a form of the measured magnitude of the Fourier transform. That is,   
\begin{align}
|z_{\omega}| 
&= \frac{1}{\sqrt{n}} \left | \underset{t=0}{\overset{n-1}{\sum}} m_te^{-j2\pi\omega t/n}\right|,~ \omega\in\Omega,
\label{eq:eq100}
\end{align}
where $\Omega$ is the set of sampled frequencies. Further, let 
\begin{align}
\mathbf{f}_{\omega} 
&= \frac{1}{\sqrt{n}} [1 \ e^{-j2\pi\omega/n} \ \cdots \ e^{-j2\pi\omega(n-1)/n}]^{H},
\end{align} 
$\mathbf{M} = \mathbf{mm}^H$ where $\mathbf{m}^{H}$ is the conjugate transpose of $\mathbf{m}$. Then, (\ref{eq:eq100}) can be rewritten as
\begin{align}
|z_{\omega}|^{2} 
&= | \langle \mathbf{f}_{\omega}, \mathbf{m} \rangle |^{2} \\
&= \text{tr}(\mathbf{f}_{\omega}^{H} \mathbf{m} \mathbf{m}^{H} \mathbf{f}_{\omega}) \\
&= \text{tr}(\mathbf{m} \mathbf{m}^{H} \mathbf{f}_{\omega} \mathbf{f}_{\omega}^{H}) \\
&= \langle \mathbf{M},\mathbf{F}_{\omega} \rangle,
\label{eq:eq200}
\end{align}
where $\mathbf{F}_{w} = \mathbf{f}_{w} \mathbf{f}_{w}^{H}$ is the rank-1 matrix of the waveform $\mathbf{f}_\omega$. Using this simple transform, we can express the quadratic magnitude $|z_\omega|^2$ as linear measurement of $\mathbf{M}$. In essence, the phase retrieval problem can be converted to the problem to reconstruct the rank-1 matrix $\mathbf{M}$ in the positive semi-definite (PSD) cone\footnote{If $\mathbf{M}$ is recovered, then the time-domain vector $\mathbf{m}$ can be computed by the eigenvalue decomposition of $\mathbf{M}$.} \cite{candes2015}:
\begin{equation} 
\begin{split}
\min\limits_\mathbf{X} &~~~~~~\text{rank}(\mathbf{X}) \\
\text{subject to}  &~~~~~~\langle \mathbf{M},\mathbf{F}_{\omega} \rangle = |z_\omega|^2,\: \omega\in\Omega\\
&~~~~~~\mathbf{X}\succeq 0.
\end{split}
\end{equation}


\item {\bf Localization in IoT networks}: In recent years, internet of things (IoT) has received much attention for its plethora of applications such as healthcare, automatic metering, environmental monitoring (temperature, pressure, moisture), and surveillance \cite{delamo2015,hackmann2014,pal2010}. Since the action in IoT networks, such as fire alarm, energy transfer, emergency request, is made primarily on the data center, data center should figure out the location information of whole devices in the networks. In this scheme, called network localization (a.k.a. cooperative localization), each sensor node measures the distance information of adjacent nodes and then sends it to the data center. Then the data center constructs a map of sensor nodes using the collected distance information \cite{torgerson1952}. Due to various reasons, such as the power outage of a sensor node or the limitation of radio communication range (see Fig. \ref{fig:fig1}), only small number of distance information is available at the data center. Also, in the vehicular networks, it is not easy to measure the distance of all adjacent vehicles when a vehicle is located at the dead zone. An example of the observed Euclidean distance matrix is  
    \be
    \mathbf{M}_o = \matc{0 & d_{12}^2 & d_{13}^2 & ?  & ?\\
      d_{21}^2 & 0 & ? & ? & ?\\
      d_{31}^2 & ? & 0 & d_{34}^2 & d_{35}^2\\
      ? & ? & d_{43}^2 & 0 & d_{45}^2\\
      ? & ? & d_{53}^2 & d_{54}^2 & 0}\nonumber ,
    \ee
where $d_{ij}$ is the pairwise distance between two sensor nodes $i$ and $j$. Since the rank of Euclidean distance matrix $\mathbf{M}$ is at most $k$+2 in the $k$-dimensional Euclidean space ($k = 2$ or $k = 3$) \cite{luongITA,luong2019}, the problem to reconstruct $\mathbf{M}$ can be well-modeled as the LRMC problem.


\begin{figure*} [t!]
\label{fig:fig001}
\centering    
\includegraphics[width=0.95\textwidth]{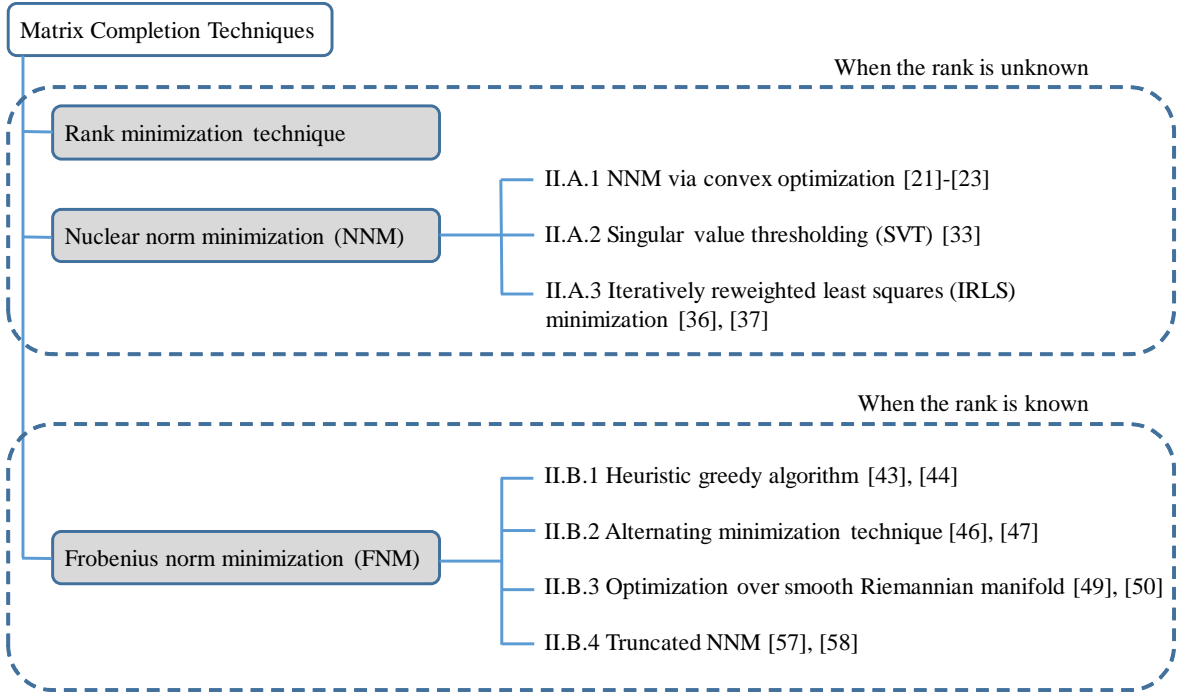} 
\caption {Outline of LRMC algorithms.} 
\end{figure*}

\item {\bf Image compression and restoration}: 
When there is dirt or scribble in a two-dimensional image (see Fig. \ref{fig:fig006}), one simple solution is to replace the contaminated pixels with the interpolated version of adjacent pixels. A better way is to exploit intrinsic domination of a few singular values in an image. In fact, one can readily approximate an image to the low-rank matrix without perceptible loss of quality. By using clean (uncontaminated) pixels as observed entries, an original image can be recovered via the low-rank matrix completion.  

\item {\bf Massive multiple-input multiple-output (MIMO)}: By exploiting hundreds of antennas at the basestation (BS), massive MIMO can offer a large gain in capacity. In order to optimize the performance gain of the massive MIMO systems, the channel state information at the transmitter (CSIT) is required~\cite{hyoungju2017}. One way to acquire the CSIT is to let each user directly feed back its own pilot observation to BS for the joint CSIT estimation of all users~\cite{shen2015}. In this setup, the MIMO channel matrix $\mathbf{H}$ can be reconstructed in two steps: 1) finding the pilot matrix $\mathbf{Y}$ using the least squares (LS) estimation or linear minimum mean square error (LMMSE) estimation and 2) reconstructing $\mathbf{H}$ using the model $\mathbf{Y} = \mathbf{H}\boldsymbol\Phi$ where each column of $\boldsymbol\Phi$ is the pilot signal from one antenna at BS \cite{marzetta2006, rusek2013}. Since the number of resolvable paths $P$ is limited in most cases, one can readily assume that $\text{rank}(\mathbf{H})\leq P$ \cite{shen2015}. In the massive MIMO systems, $P$ is often much smaller than the dimension of $\mathbf{H}$ due to the limited number of clusters around BS. Thus, the problem to recover $\mathbf{H}$ at BS can be solved via the rank minimization problem subject to the linear constraint $\mathbf{Y} = \mathbf{H}\boldsymbol\Phi$ \cite{rusek2013}.  

\end{enumerate}


Other than these, there are a bewildering variety of applications of LRMC in wireless communication, such as millimeter wave (mmWave) channel estimation \cite{rappaport2013,li2018}, topological interference management (TIM) \cite{shi2016,shi2017,shi2018,srid2015} and mobile edge caching in fog radio access networks (Fog-RAN)~\cite{peng2016,kyang2016}.

The paradigm of LRMC has received much attention ever since the works of Fazel \cite{fazel}, Candes and Recht \cite{candes_recht}, and Candes and Tao \cite{candes_tao}. Over the years, there have been lots of works on this topic~\cite{candes2015,truncatedNNM,mishra2014,dai2010}, but it might not be easy to grasp the essentials of LRMC from these studies. One reason is because many of these works are highly theoretical and based on random matrix theory, graph theory, manifold analysis, and convex optimization so that it is not easy to grasp the essential knowledge from these studies. Another reason is that most of these works are proposals of new LRMC technique so that it is difficult to catch a general idea and big picture of LRMC from these.

The primary goal of this paper is to provide a contemporary survey on LRMC, a new paradigm to recover unknown entries of a low-rank matrix from partial observations. To provide better view, insight, and understanding of the potentials and limitations of LRMC to researchers and practitioners in a friendly way, we present early scattered results in a structured and accessible way. Firstly, we classify the state-of-the-art LRMC techniques into two main categories and then explain each category in detail. Secondly, we present issues to be considered when using LRMC techniques. Specifically, we discuss the intrinsic properties required for low-rank matrix recovery and explain how to exploit a special structure, such as positive semidefinite-based structure, Euclidean distance-based structure, and graph structure, in LRMC design. Thirdly, we compare the recovery performance and the computational complexity of LRMC techniques via numerical simulations. We conclude the paper by commenting on the choice of LRMC techniques and providing future research directions.

Recently, there have been a few overview papers on LRMC. An overview of LRMC algorithms and their performance guarantees can be found in~\cite{davenport2016overview}. A survey with an emphasis on first-order LRMC techniques together with their computational efficiency is presented in~\cite{chen2018harnessing}. Our work is distinct from the previous studies in several aspects. Firstly, we categorize the state-of-the-art LRMC techniques into two classes and then explain the details of each class, which can help researchers to easily determine which technique can be used for the given problem setup. Secondly, we provide a comprehensive survey of LRMC techniques and also provide extensive simulation results on the recovery quality and the running time complexity from which one can easily see the pros and cons of each LRMC technique and also gain a better insight into the choice of LRMC algorithms. Finally, we discuss how to exploit a special structure of a low-rank matrix in the LRMC algorithm design. In particular, we introduce the CNN-based LRMC algorithm that exploits the graph structure of a low-rank matrix.


We briefly summarize notations used in this paper. 
\begin{itemize}
\item For a vector $\mathbf{a} \in \mathbb{R}^{n}$, $\text{diag}(\mathbf{a}) \in \mathbb{R}^{n \times n}$ is the diagonal matrix formed by $\mathbf{a}$.

\item For a matrix $\mathbf{A} \in \mathbb{R}^{n_{1} \times n_{2}}$, $\mathbf{a}_{i} \in \mathbb{R}^{n_{1}}$ is the $i$-th column of $\mathbf{A}$.

\item $\text{rank}(\mathbf{A})$ is the rank of $\mathbf{A}$.

\item $\mathbf{A}^{T} \in \mathbb{R}^{n_{2} \times n_{1}}$ is the transpose of $\mathbf{A}$.

\item For $\mathbf{A}, \mathbf{B} \in \mathbb{R}^{n_{1} \times n_{2}}$, $\langle \mathbf{A}, \mathbf{B} \rangle = \text{tr}(\mathbf{A}^{T} \mathbf{B})$ and $\mathbf{A} \odot \mathbf{B}$ are the inner product and the Hadamard product (or element-wise multiplication) of two matrices $\mathbf{A}$ and $\mathbf{B}$, respectively, where $\text{tr}(\cdot)$ denotes the trace operator.

\item $\| \mathbf{A} \|$, $\| \mathbf{A} \|_{\ast}$, and $\| \mathbf{A} \|_{F}$ stand for the spectral norm (i.e., the largest singular value), the nuclear norm (i.e., the sum of singular values), and the Frobenius norm of $\mathbf{A}$, respectively.

\item $\sigma_{i}(\mathbf{A})$ is the $i$-th largest singular value of $\mathbf{A}$.

\item $\mathbf{0}_{d_{1} \times d_{2}}$ and $\mathbf{1}_{d_{1} \times d_{2}}$ are $(d_{1} \times d_{2})$-dimensional matrices with entries being zero and one, respectively.

\item $\mathbf{I}_{d}$ is the $d$-dimensional identity matrix.

\item If $\mathbf{A}$ is a square matrix (i.e., $n_{1} = n_{2} = n$), $\text{diag}(\mathbf{A}) \in \mathbb{R}^{n}$ is the vector formed by the diagonal entries of $\mathbf{A}$.
\item $\text{vec}(\mathbf{X})$ is the vectorization of $\mathbf{X}$.
\end{itemize}


\section{Basics of Low-Rank Matrix Completion}

In this section, we discuss the principle to recover a low-rank matrix from partial observations. Basically, the desired low-rank matrix $\mathbf{M}$ can be recovered by solving the rank minimization problem
\begin{equation}
\begin{split}
    \min\limits_\mathbf{X} &~~~~~~~ \text{rank}(\mathbf{X}) \\
    \text{subject to} &~~~~~~~ x_{ij} = m_{ij},~ (i,j) \in {\Omega}, \label{eq:rankmin0}
\end{split}
\end{equation}
where $\Omega$ is the index set of observed entries (e.g., $\Omega = \{ (1,1), (1,2), (2,1) \}$ in the example in~\eqref{eq:eq001}). One can alternatively express the problem using the sampling operator $P_{\Omega}$. The sampling operation $P_{\Omega}(\mathbf{A})$ of a matrix $\mathbf{A}$ is defined as
$$[P_{\Omega} (\mathbf{A})]_{ij} 
= \begin{cases}
a_{ij} & \text{if } (i, j) \in \Omega \\
0 & \text{otherwise}.
\end{cases}$$
Using this operator, the problem~\eqref{eq:rankmin0} can be equivalently formulated as
\begin{equation} \label{eq:rankmin}
\begin{split}
\min\limits_\mathbf{X} &~~~~~~\text{rank}(\mathbf{X}) \\
\text{subject to}  &~~~~~~P_{\Omega}(\mathbf{X}) = P_{\Omega}(\mathbf{M}).
\end{split}
\end{equation}
A naive way to solve the rank minimization problem~\eqref{eq:rankmin} is the combinatorial search. Specifically, we first assume that $\text{rank}(\mathbf{M}) = 1$. Then, any two columns of $\textbf{M}$ are linearly dependent and thus we have the system of expressions $\mathbf{m}_i = \alpha_{i,j}\mathbf{m}_j$ for some $\alpha_{i,j}\in\mathbb{R}$. If the system has no solution for the rank-one assumption, then we move to the next assumption of $\text{rank}(\mathbf{M}) = 2$. In this case, we solve the new system of expressions $\mathbf{m}_i = \alpha_{i,j}\mathbf{m}_{j} + \alpha_{i,k}\mathbf{m}_{k}$. This procedure is repeated until the solution is found. Clearly, the combinatorial search strategy would not be feasible for most practical scenarios since it has an exponential complexity in the problem size~\cite{rankmin_complexity}. For example, when $\mathbf{M}$ is an $n \times n$ matrix, it can be shown that the number of the system expressions to be solved is $\mathcal{O}(n2^{n})$.

As a cost-effective alternative, various low-rank matrix completion (LRMC) algorithms have been proposed over the years. Roughly speaking, depending on the way of using the rank information, the LRMC algorithms can be classified into two main categories: 1) those without the
rank information and 2) those exploiting the rank information. In this section, we provide an in depth discussion of two categories (see the outline of LRMC algorithms in Fig.~\ref{fig:fig001}). 

\subsection{LRMC Algorithms Without the Rank Information}

In this subsection, we explain the LRMC algorithms that do not require the rank information of the original low-rank matrix.

\subsubsection{Nuclear Norm Minimization (NNM)}
\label{sec:nuclearnorm}

Since the rank minimization problem~\eqref{eq:rankmin} is NP-hard~\cite{fazel}, it is computationally intractable when the dimension of a matrix is large. One common trick to avoid computational issue is to replace the non-convex objective function with its convex surrogate, converting the combinatorial search problem into a convex optimization problem. There are two clear advantages in solving the convex optimization problem: 1) a local optimum solution is globally optimal and 2) there are many efficient polynomial-time convex optimization solvers (e.g., interior point method~\cite{boyd} and semi-definite programming (SDP) solver). 


In the LRMC problem, the nuclear norm $\| \mathbf{X} \|_{*}$, the sum of the singular values of $\mathbf{X}$, has been widely used as a convex surrogate of $\text{rank}(\mathbf{X})$~\cite{candes_recht}:
\begin{equation}\label{eq:nnm}
\begin{split}
\min\limits_\mathbf{X} &~~~~~~\| \mathbf{X} \|_{*} \\
\text{subject to}  &~~~~~~P_{\Omega} (\mathbf{X}) = P_{\Omega} (\mathbf{M})
\end{split}
\end{equation}  
Indeed, it has been shown that the nuclear norm is the convex envelope (the ``best'' convex approximation) of the rank function on the set $\{ \mathbf{X} \in \mathbb{R}^{n_{1} \times n_{2}} : \| \mathbf{X} \| \le 1 \}$ ~\cite{fazel}.\footnote{For any function $f:\mathcal{C} \rightarrow \mathbb{R}$, where $\mathcal{C}$ is a convex set, the convex envelope of $f$ is the largest convex function $g$ such that $f(x) \geq g(x)$ for all $x \in \mathcal{C}$. Note that the convex envelope of $\text{rank}(\mathbf{X})$ on the set $\{\mathbf{X} \in \mathbb{R}^{n_1 \times n_2}:\|\mathbf{X}\|\leq 1 \}$ is the nuclear norm $\|\mathbf{X}\|_*$ \cite{fazel}.} Note that the relaxation from the rank function to the nuclear norm is conceptually analogous to the relaxation from $\ell_0$-norm to $\ell_1$-norm in compressed sensing (CS) \cite{choi2017,kwon2014,wang2012}.

Now, a natural question one might ask is whether the NNM problem in (\ref{eq:nnm}) would offer a solution comparable to the solution of the rank minimization problem in~\eqref{eq:rankmin}. In~\cite{candes_recht}, it has been shown that if the observed entries of a rank $r$ matrix $\mathbf{M} (\in \mathbb{R}^{n \times n})$ are suitably random and the number of observed entries satisfies 
\begin{equation}
|\Omega| \ge C \mu_0n^{1.2}r \log n,
\label{eq:eq213}
\end{equation}
where $\mu_0$ is the largest coherence of $\mathbf{M}$ (see the definition in Subsection \ref{sec:coherence}), then $\mathbf{M}$ is the unique solution of the NNM problem \eqref{eq:nnm} with overwhelming probability (see Appendix \ref{app:appB}). 

%

It is worth mentioning that the NNM problem in~\eqref{eq:nnm} can also be recast as a semidefinite program (SDP) as (see Appendix \ref{app:appA})
\begin{equation} \label{eq:eqSDP_final}
\begin{split}
\min\limits_{\mathbf{Y}} &~~~~~ \text{tr}(\mathbf{Y}) \\
\text{subject to} &~~~~~ \langle \mathbf{Y}, \mathbf{A}_{k} \rangle = b_k,~~ k = 1, \cdots , |\Omega| \\
&~~~~~  \mathbf{Y} \succeq 0,
\end{split}
\end{equation}
where $\mathbf{Y} = \begin{bmatrix}
\mathbf{W}_{1} & \mathbf{X} \\
\mathbf{X}^{T} & \mathbf{W}_{2}
\end{bmatrix} \in \mathbb{R}^{(n_{1} + n_{2}) \times (n_{1} + n_{2})}$, $\{ \mathbf{A}_{k} \}_{k=1}^{|\Omega|}$ is the sequence of linear sampling matrices, and $\{ b_{k} \}_{k=1}^{|\Omega|}$ are the observed entries.
The problem \eqref{eq:eqSDP_final} can be solved by the off-the-shelf SDP solvers such as SDPT3~\cite{toh} and SeDuMi~\cite{sturm1999using} using interior-point methods \cite{van1996,zhang1998,nesterov1998,potra1998,van2005,potra2000}.  It has been shown that the computational complexity of SDP techniques is $\mathcal{O}(n^3)$ where $n = \max(n_1, n_2)$~\cite{van2005}. Also, it has been shown that under suitable conditions, the output $\widehat{\mathbf{M}}$ of SDP satisfies $\|\widehat{\mathbf{M}} - \mathbf{M}\|_F \leq \epsilon$ in at most $\mathcal{O}(n^{\omega}\log(\frac{1}{\epsilon}))$ iterations where $\omega$ is a positive constant~\cite{potra2000}. 
Alternatively, one can reconstruct $\mathbf{M}$ by solving the equivalent nonconvex quadratic optimization form of the NNM problem~\cite{recht2010}. Note that this approach has computational benefit since the number of primal variables of NNM is reduced from $n_1n_2$ to $r(n_1+n_2)$ ($r\leq \min(n_1,n_2)$). Interested readers may refer to~\cite{recht2010} for more details.
\subsubsection{Singular Value Thresholding (SVT)}

While the solution of the NNM problem in~\eqref{eq:nnm} can be obtained by solving~\eqref{eq:eqSDP_final}, this procedure is computationally burdensome when the size of the matrix is large. 

As an effort to mitigate the computational burden, the singular value thresholding (SVT) algorithm has been proposed~\cite{svt}. The key idea of this approach is to put the regularization term into the objective function of the NNM problem:
\begin{equation} \label{eq:svt1}
\begin{split}
    \min\limits_\mathbf{X} &~~~~~ \tau \| \mathbf{X} \|_{*} + \frac{1}{2} \| \mathbf{X} \|_{F}^2 \\
    \text{subject to} &~~~~~ P_{\Omega} (\mathbf{X}) = P_{\Omega} (\mathbf{M}),
\end{split}
\end{equation}
where $\tau$ is the regularization parameter. In \cite[Theorem 3.1]{svt}, it has been shown that the solution to the problem~\eqref{eq:svt1} converges to the solution of the NNM problem as $\tau \rightarrow \infty$.\footnote{In practice, a large value of $\tau$ has been suggested (e.g., $\tau=5n$ for an $n\times n$ low rank matrix) for the fast convergence of SVT. For example, when $\tau = 5000$, it requires 177 iterations to reconstruct a 1000$\times$1000 matrix of rank 10 \cite{svt}.}

Let $\mathcal{L}(\mathbf{X}, \mathbf{Y})$ be the Lagrangian function associated with~\eqref{eq:svt1}, i.e.,
\begin{equation}
\mathcal{L}(\mathbf{X},\mathbf{Y}) = \tau\|\mathbf{X}\|_\ast + \frac{1}{2}\|\mathbf{X}\|_F^2 + \langle \mathbf{Y},P_{\Omega}(\mathbf{M}) - P_{\Omega}(\mathbf{X}) \rangle
\end{equation}
where $\mathbf{Y}$ is the dual variable. Let $\widehat{\mathbf{X}}$ and $\widehat{\mathbf{Y}}$ be the primal and dual optimal solutions. Then, by the strong duality~\cite{boyd}, we have
\begin{equation}
\max\limits_\mathbf{Y}\min\limits_\mathbf{X}\mathcal{L}(\mathbf{X},\mathbf{Y}) = \mathcal{L}(\widehat{\mathbf{X}}, \widehat{\mathbf{Y}}) = \min\limits_\mathbf{X}\max\limits_\mathbf{Y}\mathcal{L}(\mathbf{X},\mathbf{Y}).
\label{eq:eqSVT004}
\end{equation}
The SVT algorithm finds $\widehat{\mathbf{X}}$ and $\widehat{\mathbf{Y}}$ in an iterative fashion. Specifically, starting with $\mathbf{Y}_{0} = \mathbf{0}_{n_{1} \times n_{2}}$, SVT updates $\mathbf{X}_{k}$ and $\mathbf{Y}_{k}$ as
\begin{subequations}
\begin{align}
&\mathbf{X}_{k} = \arg \underset{\mathbf{X}}{\min} \ \mathcal{L} ( \mathbf{X}, \mathbf{Y}_{k-1} ), \label{eq:svt_update equation of X_k_1} \\
&\mathbf{Y}_{k} = \mathbf{Y}_{k-1} + \delta_{k} \frac{\partial \mathcal{L}(\mathbf{X}_{k}, \mathbf{Y})}{\partial \mathbf{Y}}, \label{eq:svt_update equation of Y_k_1}
\end{align}
\end{subequations}
where $\{ \delta_{k} \}_{k \ge 1}$ is a sequence of positive step sizes. Note that $\mathbf{X}_{k}$ can be expressed as
\begin{align}
\mathbf{X}_{k}  
&\hspace{.37mm}=  \arg\min\limits_{\mathbf{X}} \tau\|\mathbf{X}\|_\ast + \frac{1}{2}\|\mathbf{X}\|_F^2 - \langle \mathbf{Y}_{k-1},P_{\Omega}(\mathbf{X}) \rangle \notag\\
&\overset{(a)}{=}  \arg\min\limits_{\mathbf{X}} \tau\|\mathbf{X}\|_\ast + \frac{1}{2}\|\mathbf{X}\|_F^2 - \langle P_{\Omega}(\mathbf{Y}_{k-1}),\mathbf{X} \rangle \notag\\
&\overset{(b)}{=} \arg\min\limits_{\mathbf{X}} \tau\|\mathbf{X}\|_\ast + \frac{1}{2}\|\mathbf{X}\|_F^2 - \langle \mathbf{Y}_{k-1},\mathbf{X} \rangle \nonumber \\
&\hspace{.37mm}=  \arg\min\limits_{\mathbf{X}} \tau\|\mathbf{X}\|_\ast + \frac{1}{2}\|\mathbf{X}-\mathbf{Y}_{k-1} \|_F^2, \label{eq:svt_update equation of X_k_2}
\end{align}
where (a) is because $\langle P_{\Omega}(\mathbf{A}),\mathbf{B} \rangle = \langle \mathbf{A},P_{\Omega}(\mathbf{B}) \rangle$ and (b) is because $\mathbf{Y}_{k-1}$ vanishes outside of $\Omega$ (i.e., $P_{\Omega}(\mathbf{Y}_{k-1}) = \mathbf{Y}_{k-1}$) by~\eqref{eq:svt_update equation of Y_k_1}. Due to the inclusion of the nuclear norm, finding out the solution $\mathbf{X}_{k}$ of~\eqref{eq:svt_update equation of X_k_2} seems to be difficult. However, thanks to the intriguing result of Cai et al., we can easily obtain the solution. 
\begin{theorem}[\hspace{-.2mm}{\cite[Theorem 2.1]{svt}}] \label{thm:thmSVT001}
Let $\mathbf{Z}$ be a matrix whose singular value decomposition (SVD) is $\mathbf{Z} = \mathbf{U} \mathbf{\Sigma} \mathbf{V}^T$. Define $t_{+} = \max \{ t, 0 \}$ for $t \in \mathbb{R}$. Then,
\begin{equation}
\mathcal{D}_{\tau}(\mathbf{Z}) 
= \arg\min\limits_{\mathbf{X}}\tau\|\mathbf{X}\|_\ast + \frac{1}{2}\|\mathbf{X}-\mathbf{Z}\|_F^2,
\label{eq:eq003}
\end{equation} 
where $\mathcal{D}_{\tau}$ is the singular value thresholding operator defined as
\begin{align}
\mathcal{D}_{\tau} (\mathbf{Z}) 
&= \mathbf{U} \diag ( \{ ( \sigma_{i}(\boldsymbol\Sigma) - \tau )_{+} \}_{i} \} ) \mathbf{V}^{T}.
\label{eq:eq201}
\end{align}
\end{theorem}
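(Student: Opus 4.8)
The plan is to exploit the strict convexity of the objective together with the subgradient optimality condition, since the objective is nonsmooth precisely at matrices with vanishing singular values. Write $f(\mathbf{X}) = \tau\|\mathbf{X}\|_\ast + \frac{1}{2}\|\mathbf{X}-\mathbf{Z}\|_F^2$. The quadratic term makes $f$ strictly convex, so $f$ admits a unique minimizer, and it therefore suffices to exhibit a single matrix at which the first-order condition $\mathbf{0}\in\partial f$ holds. Differentiating the Frobenius term and collecting the nuclear-norm subdifferential, this condition reads $\mathbf{Z}-\widehat{\mathbf{X}}\in\tau\,\partial\|\widehat{\mathbf{X}}\|_\ast$, and the goal is to check that $\widehat{\mathbf{X}}=\mathcal{D}_\tau(\mathbf{Z})$ satisfies it.

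First I would recall the characterization of the nuclear-norm subdifferential: if $\widehat{\mathbf{X}}$ has compact SVD $\mathbf{U}_0\boldsymbol\Sigma_0\mathbf{V}_0^T$ retaining only its strictly positive singular values, then $\partial\|\widehat{\mathbf{X}}\|_\ast = \{\mathbf{U}_0\mathbf{V}_0^T+\mathbf{W}: \mathbf{U}_0^T\mathbf{W}=\mathbf{0},\ \mathbf{W}\mathbf{V}_0=\mathbf{0},\ \|\mathbf{W}\|\le 1\}$. The next step is to split the SVD of $\mathbf{Z}$ according to the threshold: group the singular vectors whose singular values strictly exceed $\tau$ into $\mathbf{U}_0\boldsymbol\Sigma_0\mathbf{V}_0^T$ and those at most $\tau$ into $\mathbf{U}_1\boldsymbol\Sigma_1\mathbf{V}_1^T$, so that $\mathcal{D}_\tau(\mathbf{Z})=\mathbf{U}_0(\boldsymbol\Sigma_0-\tau\mathbf{I})\mathbf{V}_0^T$ is exactly the candidate minimizer, carrying the left and right singular vectors $\mathbf{U}_0,\mathbf{V}_0$. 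A direct computation then gives $\mathbf{Z}-\mathcal{D}_\tau(\mathbf{Z}) = \tau\mathbf{U}_0\mathbf{V}_0^T+\mathbf{U}_1\boldsymbol\Sigma_1\mathbf{V}_1^T$, so I would set $\mathbf{W}=\tau^{-1}\mathbf{U}_1\boldsymbol\Sigma_1\mathbf{V}_1^T$ and verify the three membership conditions. Orthogonality of the column spaces of $\mathbf{U}_0$ and $\mathbf{U}_1$ (and of $\mathbf{V}_0$ and $\mathbf{V}_1$) kills the cross terms $\mathbf{U}_0^T\mathbf{W}$ and $\mathbf{W}\mathbf{V}_0$, while $\|\mathbf{W}\|=\tau^{-1}\sigma_{\max}(\boldsymbol\Sigma_1)\le 1$ because every singular value retained in $\boldsymbol\Sigma_1$ is at most $\tau$. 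This confirms $\mathbf{Z}-\mathcal{D}_\tau(\mathbf{Z})\in\tau\,\partial\|\mathcal{D}_\tau(\mathbf{Z})\|_\ast$, so $\mathcal{D}_\tau(\mathbf{Z})$ satisfies the optimality condition and, by uniqueness, equals the minimizer.

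The main obstacle is the nuclear-norm subdifferential itself: the whole argument hinges on having the correct description of $\partial\|\cdot\|_\ast$ and on handling the boundary case $\sigma_i=\tau$ with care. Such a direction may be placed in either block without altering $\mathcal{D}_\tau(\mathbf{Z})$, since its soft-thresholded value $(\sigma_i-\tau)_+$ is zero; keeping it in the $\mathbf{U}_1,\mathbf{V}_1$ block contributes $\tau^{-1}\cdot\tau=1$ to $\|\mathbf{W}\|$, which still respects the bound $\|\mathbf{W}\|\le 1$ with equality. Once the subdifferential characterization is in hand, the remaining verification is only the short computation above.
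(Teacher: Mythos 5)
Your proof is correct and is essentially the argument the paper relies on: the survey states this result without proof, citing \cite[Theorem 2.1]{svt}, and the proof given there proceeds exactly as you do---strict convexity for uniqueness, the standard characterization of $\partial\|\cdot\|_\ast$, the split of the SVD of $\mathbf{Z}$ at the threshold $\tau$, and verification that $\mathbf{Z}-\mathcal{D}_\tau(\mathbf{Z})$ is $\tau$ times a valid subgradient. Your treatment of the boundary case $\sigma_i=\tau$ (contributing exactly $1$ to $\|\mathbf{W}\|$) is also handled correctly.
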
 
By Theorem~\ref{thm:thmSVT001}, the right-hand side of~\eqref{eq:svt_update equation of X_k_2} is $\mathcal{D}_{\tau}(\mathbf{Y}_{k-1})$. To conclude, the update equations for $\mathbf{X}_{k}$ and $\mathbf{Y}_{k}$ are given by
\begin{subequations}
\begin{align}
&\mathbf{X}_{k} = \mathcal{D}_{\tau}(\mathbf{Y}_{k-1}), \label{eq:svt_update equation of X_k} \\
&\mathbf{Y}_{k} = \mathbf{Y}_{k-1} + \delta_{k} (P_{\Omega}(\mathbf{M}) - P_{\Omega}(\mathbf{X}_{k})). \label{eq:svt_update equation of Y_k}
\end{align}
\end{subequations}
One can notice from \eqref{eq:svt_update equation of X_k} and \eqref{eq:svt_update equation of Y_k} that the SVT algorithm is computationally efficient since we only need the truncated SVD and elementary matrix operations in each iteration.
Indeed, let $r_{k}$ be the number of singular values of $\mathbf{Y}_{k-1}$ being greater than the threshold $\tau$. Also, we suppose $\{r_{k}\}$ converges to the rank of the original matrix, i.e., $\lim_{k\rightarrow\infty}r_k = r$. Then the computational complexity of SVT is $\mathcal{O}(rn_1n_2)$. Note also that the iteration number to achieve the $\epsilon$-approximation\footnote{By $\epsilon$-approximation, we mean $\|\widehat{\mathbf{M}}-\mathbf{M}^\ast\|_F\leq \epsilon$ where $\widehat{\mathbf{M}}$ is the reconstructed matrix and $\mathbf{M}^\ast$ is the optimal solution of SVT.} is $\mathcal{O}(\frac{1}{\sqrt{\epsilon}})$~\cite{svt}.
In Table~\ref{tab:svt}, we summarize the SVT algorithm. For the details of the stopping criterion of SVT, see \cite[Section 5]{svt}.

Over the years, various SVT-based techniques have been proposed~\cite{tanner, combettes, jain}. 
In \cite{combettes}, an iterative matrix completion algorithm using the SVT-based operator called proximal operator has been proposed. Similar algorithms inspired by the iterative hard thresholding (IHT) algorithm in CS have also been proposed~\cite{tanner, jain}.


\setlength{\arrayrulewidth}{1pt}
\begin{table}
  \centering
\caption{The SVT Algorithm} \label{tab:svt}
\vspace{-2mm}
\begin{tabular}{@{}ll}
\hline \vspace{-7pt} \\
\textbf{Input}       & observed entries $P_\Omega(\mathbf{M})$, \\
					 & a sequence of positive step sizes $ \{ \delta_{k} \}_{k \ge 1}$, \\
					 & a regularization parameter $\tau > 0$, \\
					 & and a stopping criterion $T$ \\ 		
\textbf{Initialize}  & iteration counter $k = 0$ \\
					 & and $\mathbf{Y}_{0} = \mathbf{0}_{n_{1} \times n_{2}}$, \\
\hline \vspace{-7pt} \\
\textbf{While}		 & $T = \text{false}$~~\textbf{do} \\
					 & $k = k + 1$ \\
					 & $[\mathbf{U}_{k-1}, \mathbf{\Sigma}_{k-1}, \mathbf{V}_{k-1}] = \text{svd}(\mathbf{Y}_{k-1})$ \\					                      
					 & $\mathbf{X}_{k} = \mathbf{U}_{k-1} \diag ( \{ ( \sigma_{i}(\boldsymbol\Sigma_{k-1}) - \tau )_{+} \}_{i} \} ) \mathbf{V}_{k-1}^{T}$ using \eqref{eq:eq201} \\                     
					 & $\mathbf{Y}_{k} = \mathbf{Y}_{k-1} + \delta_k(P_{\Omega}(\mathbf{M})-P_{\Omega}(\mathbf{X}_{k}))$ \\              
\textbf{End} \\
\hline \vspace{-7pt} \\					       
\textbf{Output}  & $\mathbf{X}^{k}$ \\
\vspace{-7pt} \\
\hline
\end{tabular}
\end{table} 


\subsubsection{Iteratively Reweighted Least Squares (IRLS) Minimization}
Yet another simple and computationally efficient way to solve the NNM problem is the IRLS minimization technique \cite{fornasier2011,mohan2012}. In essence, the NNM problem can be recast using the least squares minimization as
\begin{equation}
\begin{split}
\min\limits_{\mathbf{X},\mathbf{W}} &~~~~~~\|\mathbf{W}^{\frac{1}{2}}\mathbf{X}\|_F^2 \\
\text{subject to}  &~~~~~~P_{\Omega} (\mathbf{X}) = P_{\Omega} (\mathbf{M}),
\end{split}
\label{eq:eq212}
\end{equation} 
where $\mathbf{W} = (\mathbf{XX}^T)^{-\frac{1}{2}}$. It can be shown that \eqref{eq:eq212} is equivalent to the NNM problem \eqref{eq:nnm} since we have \cite{fornasier2011}
\begin{equation}
\|\mathbf{X}\|_\ast  = tr((\mathbf{XX}^T)^{\frac{1}{2}}) = \|\mathbf{W}^{\frac{1}{2}}\mathbf{X}\|_F^2.
\end{equation}
The key idea of the IRLS technique is to find $\mathbf{X}$ and $\mathbf{W}$ in an iterative fashion. The update expressions are
\begin{subequations}
\begin{align}
\label{eq:eq203}
\mathbf{X}_{k} & =  \text{arg}\min\limits_{P_\Omega(\mathbf{X}) = P_\Omega(\mathbf{M})}\|\mathbf{W}_{k-1}^{\frac{1}{2}}\mathbf{X}\|_F^2,\\
\label{eq:eq204}
\mathbf{W}_{k} & =  (\mathbf{X}_{k}\mathbf{X}_{k}^T)^{-\frac{1}{2}}.
\end{align}  
\end{subequations}
Note that the weighted least squares subproblem \eqref{eq:eq203} can be easily solved by updating each and every column of $\mathbf{X}_k$ \cite{fornasier2011}.  
In order to compute $\mathbf{W}_k$, we need a matrix inversion \eqref{eq:eq204}. To avoid the ill-behavior (i.e., some of the singular values of $\mathbf{X}_k$ approach to zero), an approach to use the perturbation of singular values has been proposed \cite{fornasier2011,mohan2012}.
Similar to SVT, the computational complexity per iteration of the IRLS-based technique is $\mathcal{O}(rn_1n_2)$. 
Also, IRLS requires $\mathcal{O}(\log(\frac{1}{\epsilon}))$ iterations to achieve the $\epsilon$-approximation solution. 
We summarize the IRLS minimization technique in Table \ref{tab:irls}.



\setlength{\arrayrulewidth}{1pt}
\begin{table}
  \centering
\caption{The IRLS Algorithm} \label{tab:irls}
\vspace{-2mm}

\begin{tabular}{@{}ll}
\hline \vspace{-7pt} \\
\textbf{Input}       & a constant $q \geq r$, \\
					 & a scaling parameter $\gamma > 0$, \\
					 & and a stopping criterion $T$ \\ 		
\textbf{Initialize}  & iteration counter $k = 0$, \\
					 & a regularizing sequence $\epsilon_0 = 1$,\\					 
					 & and $\mathbf{W}_{0} = \mathbf{I}$ \\
\hline \vspace{-7pt} \\
\textbf{While}		 & $T = \text{false}$~~\textbf{do} \\
					 & $k = k + 1$ \\
					 & $\mathbf{X}_{k} = \text{arg}\min_{P_\Omega(\mathbf{X}) = P_\Omega(\mathbf{M})}\|\mathbf{W}_{k-1}^{\frac{1}{2}}\mathbf{X}\|_F^2$ \\                     
					 & $\epsilon_k = \min(\epsilon_{k-1}, \gamma\sigma_{q+1}(\mathbf{X}_k))$ \\                     
					 & Compute a SVD perturbation version $\widetilde{\mathbf{X}}_k$ of $\mathbf{X}_k$ \cite{fornasier2011}\\					 
					 & $\mathbf{W}_{k} = (\widetilde{\mathbf{X}}_{k}\widetilde{\mathbf{X}}_k^T)^{-\frac{1}{2}}$ \\              
\textbf{End} \\
\hline \vspace{-7pt} \\					       
\textbf{Output}  & $\mathbf{X}^{k}$ \\
\vspace{-7pt} \\
\hline
\end{tabular}

\end{table}

\subsection{LRMC Algorithms Using Rank Information}

In many applications such as localization in IoT networks, recommendation system, and image restoration, we encounter the situation where the rank of a desired matrix is known in advance. As mentioned, the rank of a Euclidean distance matrix in a localization problem is at most $k+2$ ($k$ is the dimension of the Euclidean space). In this situation, the LRMC problem can be formulated as a Frobenius norm minimization (FNM) problem:
\begin{equation} \label{eq:lsm}
\begin{split}
    \min\limits_\mathbf{X} &~~~~~ \frac{1}{2}\| P_{\Omega} (\mathbf{M}) - P_{\Omega} (\mathbf{X}) \|_{F}^2 \\
    \text{subject to} &~~~~~ \text{rank} (\mathbf{X}) \leq r. 
\end{split}
\end{equation}
Due to the inequality of the rank constraint, an approach to use the approximate rank information (e.g., upper bound of the rank) has been proposed~\cite{adm}.
The FNM problem has two main advantages: 1) the problem is well-posed in the noisy scenario and 2) the cost function is differentiable so that various gradient-based optimization techniques (e.g., gradient descent, conjugate gradient, Newton methods, and manifold optimization) can be used
to solve the problem.

Over the years, various techniques to solve the FNM problem in~\eqref{eq:lsm} have been proposed~\cite{adm,rom,pf,asd,lmafit,mishra2014,dai2010,bart2013,sgg,truncatedNNM}. 
The performance guarantee of the FNM-based techniques has also been provided~\cite{ge2016,ge2017,du2017}. 
It has been shown that under suitable conditions of the sampling ratio $p = |\Omega|/(n_1n_2)$ and the largest coherence $\mu_0$ of $\mathbf{M}$ (see the definition in Subsection \ref{sec:coherence}), the gradient-based algorithms globally converges to $\mathbf{M}$ with high probability~\cite{ge2017}.  
Well-known FNM-based LRMC techniques include greedy techniques \cite{adm}, alternating projection techniques \cite{pf}, and optimization over Riemannian manifold \cite{bart2013}. In this subsection, we explain these techniques in detail.

\subsubsection{Greedy Techniques}
\label{sec:greedy}
In recent years, greedy algorithms have been popularly used for LRMC due to the computational simplicity. In a nutshell, they solve the LRMC problem by making a heuristic decision at each iteration with a hope to find the right solution in the end.


Let $r$ be the rank of a desired low-rank matrix $\mathbf{M} \in \mathbb{R}^{n \times n}$ and $\mathbf{M} = \mathbf{U} \mathbf{\Sigma} \mathbf{V}^{T}$ be the singular value decomposition of $\mathbf{M}$ where $\mathbf{U}, \mathbf{V} \in \mathbb{R}^{n \times r}$. By noting that 
\begin{align}
\mathbf{M} 
&= \sum\limits_{i = 1}^{r}\sigma_i(\mathbf{M}) \mathbf{u}_i\mathbf{v}_i^T,
\end{align} 
$\mathbf{M}$ can be expressed as a linear combination of $r$ rank-one matrices.
The main task of greedy techniques is to investigate the \textit{atom set} $\mathcal{A}_{\mathbf{M}} = \{ \boldsymbol{\varphi}_{i} = \mathbf{u}_{i} \mathbf{v}_{i}^{T} \}_{i=1}^{r}$ of rank-one matrices representing $\mathbf{M}$. Once the atom set $\mathcal{A}_{\mathbf{M}}$ is found, the singular values $\sigma_{i}(\mathbf{M}) = \sigma_{i}$ can be computed easily by solving the following problem
\begin{align}
(\sigma_1,\cdots,\sigma_r) 
&= \arg\min\limits_{\alpha_i}\|P_{\Omega}(\mathbf{M}) - P_{\Omega}(\sum\limits_{i=1}^{r}\alpha_i\boldsymbol\varphi_i)\|_F.
\label{eq:eq005}
\end{align}
To be specific, let $\mathbf{A} = [\text{vec} (P_{\Omega} (\boldsymbol{\varphi}_{1})) \ \cdots \ \text{vec}(P_{\Omega}(\boldsymbol{\varphi}_{r}))]$, $\boldsymbol{\alpha} = [\alpha_{1} \ \cdots \ \alpha_{r}]^{T}$ and $\mathbf{b} = \text{vec}(P_{\Omega}(\mathbf{M}))$. Then, we have 
$(\sigma_1,\cdots,\sigma_r)  = \text{arg}\min\limits_{\boldsymbol\alpha}\|\mathbf{b} - \mathbf{A}\boldsymbol\alpha\|_2 = \mathbf{A}^{\dagger}\mathbf{b}$.


One popular greedy technique is atomic decomposition for minimum rank approximation (ADMiRA)~\cite{adm}, which can be viewed as an extension of the compressive sampling matching pursuit (CoSaMP) algorithm in CS~\cite{needell2009,choi2017,kwon2014,wang2012}. ADMiRA employs a strategy of adding as well as pruning to identify the atom set $\mathcal{A}_{\mathbf{M}}$. In the adding stage, ADMiRA identifies $2r$ rank-one matrices representing a residual best and then adds the matrices to the pre-chosen atom set. Specifically, if $\mathbf{X}_{i-1}$ is the output matrix generated in the $(i-1)$-th iteration and $\mathcal{A}_{i-1}$ is its atom set, then ADMiRA computes the residual $\mathbf{R}_{i} = P_{\Omega}(\mathbf{M}) - P_{\Omega}(\mathbf{X}_{i-1})$ and then adds $2r$ leading principal components of $\mathbf{R}_{i}$ to $\mathcal{A}_{i-1}$. In other words, the enlarged atom set $\Psi_{i}$ is given by 
\begin{align} \label{eq:enlarged atom set_ADMiRA}
\Psi_{i}
&= \mathcal{A}_{i-1} \cup \{ \mathbf{u}_{\mathbf{R}_{i}, j} \mathbf{v}_{\mathbf{R}_{i}, j}^{T} : 1 \le j \le 2r \},
\end{align}
where $\mathbf{u}_{\mathbf{R}_{i}, j}$ and $\mathbf{v}_{\mathbf{R}_{i}, j}$ are the $j$-th principal left and right singular vectors of $\mathbf{R}_{i}$, respectively. Note that $\Psi_{i}$ contains at most $3r$ elements. In the pruning stage, ADMiRA refines $\Psi_{i}$ into a set of $r$ atoms. To be specific, if $\widetilde{\mathbf{X}}_{i}$ is the best rank-$3r$ approximation of $\mathbf{M}$, i.e.,\footnote{Note that the solution to (\ref{eq:eq006}) can be computed in a similar way as in (\ref{eq:eq005}).}
\begin{align}
\widetilde{\mathbf{X}}_{i} 
&= \arg\min\limits_{\mathbf{X}\in\text{span}(\Psi_{i})}\|P_{\Omega}(\mathbf{M}) - P_{\Omega}(\mathbf{X})\|_F,
\label{eq:eq006}
\end{align}
then the refined atom set $\mathcal{A}_{i}$ is expressed as
\begin{align} \label{eq:updated atom set_ADMiRA}
\mathcal{A}_{i}
&= \{ \mathbf{u}_{\widetilde{\mathbf{X}}_{i}, j} \mathbf{v}_{\widetilde{\mathbf{X}}_{i}, j}^{T} : 1 \le j \le r \},
\end{align}
where $\mathbf{u}_{\widetilde{\mathbf{X}}_{i}, j}$ and $\mathbf{v}_{\widetilde{\mathbf{X}}_{i}, j}$ are the $j$-th principal left and right singular vectors of $\widetilde{\mathbf{X}}_{i}$, respectively. 
The computational complexity of ADMiRA is mainly due to two operations: the least squares operation in \eqref{eq:eq005} and the SVD-based operation to find out the leading atoms of the required matrix (e.g., $\mathbf{R}_k$ and $\mathbf{\widetilde{\mathbf{X}}_{k+1}}$). First, since \eqref{eq:eq005} involves the pseudo-inverse of $\mathbf{A}$ (size of $|\Omega| \times \mathcal{O}(r)$), its computational cost is $\mathcal{O}(r|\Omega|)$. Second, the computational cost of performing a truncated SVD of $\mathcal{O}(r)$ atoms is $\mathcal{O}(rn_1n_2)$. Since $|\Omega| < n_1n_2$, the computational complexity of ADMiRA per iteration is $\mathcal{O}(rn_1n_2)$.
Also, the iteration number of ADMiRA to achieve the $\epsilon$-approximation is $\mathcal{O}(\log(\frac{1}{\epsilon}))$~\cite{adm}. 
In Table~\ref{tab:ADMiRA}, we summarize the ADMiRA algorithm.

\setlength{\arrayrulewidth}{1pt}
\begin{table}
  \centering
\caption{The ADMiRA Algorithm} \label{tab:ADMiRA}
\vspace{-2mm}
\begin{tabular}{@{}ll}
\hline \vspace{-7pt} \\
\textbf{Input}       & observed entries $P_\Omega(\mathbf{M}) \in \mathbb{R}^{n \times n}$, \\
					 & rank of a desired low-rank matrix $r$, \\
					 & and a stopping criterion $T$ \\
\textbf{Initialize}  & iteration counter $k = 0$, \\
					 & $\mathbf{X}_{0} = \mathbf{0}_{n \times n}$, \\
					 & and $\mathcal{A}_{0} = \emptyset$ \\
\hline \vspace{-7pt} \\
\textbf{While}		 & $T = \text{false}$~~\textbf{do} \\
					 & $\mathbf{R}_{k} = P_{\Omega}(\mathbf{M}) - P_{\Omega}(\mathbf{X}_{k})$ \\
					 & $[\mathbf{U}_{\mathbf{R}_{k}}, \mathbf{\Sigma}_{\mathbf{R}_{k}}, \mathbf{V}_{\mathbf{R}_{k}}] = \text{svds}(\mathbf{R}_{k}, 2r)$ \\
					 & (Augment)~~$\Psi_{k+1} = \mathcal{A}_{k} \cup \{ \mathbf{u}_{\mathbf{R}_{k}, j} \mathbf{v}_{\mathbf{R}_{k}, j}^{T} : 1 \le j \le 2r \}$ \\
					 & $\widetilde{\mathbf{X}}_{k+1} = \underset{\mathbf{X}\in\text{span}(\Psi_{k+1})}{\arg \min} \|P_{\Omega}(\mathbf{M}) - P_{\Omega}(\mathbf{X})\|_F$ using \eqref{eq:eq005} \\ 
					 & $[\mathbf{U}_{\widetilde{\mathbf{X}}_{k+1}}, \mathbf{\Sigma}_{\widetilde{\mathbf{X}}_{k+1}}, \mathbf{V}_{\widetilde{\mathbf{X}}_{k+1}}] = \text{svds}({\widetilde{\mathbf{X}}_{k+1}}, r)$ \\                   
					 & (Prune)~~$\mathcal{A}_{k+1} = \{ \mathbf{u}_{\widetilde{\mathbf{X}}_{k+1}, j} \mathbf{v}_{\widetilde{\mathbf{X}}_{k+1}, j}^{T} : 1 \le j \le r \}$  \\
					 & (Estimate)~~$\mathbf{X}_{k+1} = \underset{\mathbf{X}\in\text{span}(\mathcal{A}_{k+1})}{\arg \min} \|P_{\Omega}(\mathbf{M}) - P_{\Omega}(\mathbf{X})\|_F$   \\
					 & \qquad\qquad\quad using \eqref{eq:eq005} \\
					 & $k = k+1$ \\     
\textbf{End} \\
\hline \vspace{-7pt} \\	                
\textbf{Output}  &$\mathcal{A}_{k}$, $\mathbf{X}_{k}$ \\
\vspace{-7pt} \\
\hline
\end{tabular}
\end{table}

Yet another well-known greedy method is the rank-one matrix pursuit algorithm \cite{rom}, an extension of the orthogonal matching pursuit algorithm in CS \cite{tropp2007}. In this approach, instead of choosing multiple atoms of a matrix, an atom corresponding to the largest singular value of the residual matrix $\mathbf{R}_k$ is chosen. 



\vspace{.25cm}
\subsubsection{Alternating Minimization Techniques}
%
Many of LRMC algorithms \cite{svt,adm} require the computation of (partial) SVD to obtain the singular values and vectors (expressed as $\mathcal{O}(rn^2)$). As an effort to further reduce the computational burden of SVD, alternating minimization techniques have been proposed \cite{pf,asd, lmafit}. The basic premise behind the alternating minimization techniques is that a low-rank matrix $\mathbf{M} \in \mathbb{R}^{n_{1} \times n_{2}}$ of rank $r$ can be factorized into tall and fat matrices, i.e., $\mathbf{M} = \mathbf{X} \mathbf{Y}$ where $\mathbf{X}\in\mathbb{R}^{n_1\times r}$ and $ \mathbf{Y}\in \mathbb{R}^{r\times n_2}$ $(r \ll n_{1}, n_{2})$. The key idea of this approach is to find out $\mathbf{X}$ and $\mathbf{Y}$ minimizing the residual defined as the difference between the original matrix and the estimate of it on the sampling space. In other words, they recover $\mathbf{X}$ and $\mathbf{Y}$ by solving \begin{align} \label{eq:alternative minimization techniques}
\underset{\mathbf{X}, \mathbf{Y}}{\min}~~~~~\frac{1}{2} \| P_{\Omega}(\mathbf{M}) - P_{\Omega}(\mathbf{X} \mathbf{Y}) \|_{F}^{2}.
\end{align}
Power factorization, a simple alternating minimization algorithm, finds out the solution to~\eqref{eq:alternative minimization techniques} by updating $\mathbf{X}$ and $\mathbf{Y}$ alternately as~\cite{pf}
\begin{subequations}
\begin{align}
&\mathbf{X}_{i+1} = \arg\min_\mathbf{X} \|P_\Omega(\mathbf{M}) - P_\Omega(\mathbf{X}\mathbf{Y}_i) \|_{F}^{2},  \\
&\mathbf{Y}_{i+1} = \arg\min_\mathbf{Y} \|P_\Omega(\mathbf{M}) - P_\Omega(\mathbf{X}_{i+1}\mathbf{Y}) \|_{F}^{2}. 
\end{align}
\end{subequations}
Alternating steepest descent (ASD) is another alternating method to find out the solution~\cite{asd}. The key idea of ASD is to update $\mathbf{X}$ and $\mathbf{Y}$ by applying the steepest gradient descent method to the objective function $f(\mathbf{X},\mathbf{Y})= \frac{1}{2} \| P_{\Omega}(\mathbf{M}) - P_{\Omega}(\mathbf{X} \mathbf{Y}) \|_{F}^{2}$ in~\eqref{eq:alternative minimization techniques}. Specifically, ASD first computes the gradient of $f(\mathbf{X}, \mathbf{Y})$ with respect to $\mathbf{X}$ and then updates $\mathbf{X}$ along the steepest gradient descent direction: 
\begin{align}
\mathbf{X}_{i+1} 
&= \mathbf{X}_i-t_{x_i} \bigtriangledown f_{\mathbf{Y}_{i}}(\mathbf{X}_{i}),
\end{align}
where the gradient descent direction $\bigtriangledown f_{\mathbf{Y}_{i}}(\mathbf{X}_{i})$ and stepsize $t_{x_{i}}$ are given by
\begin{subequations}
\begin{align}
&\bigtriangledown f_{\mathbf{Y}_{i}}(\mathbf{X}_{i}) = -(P_\Omega(\mathbf{M}) - P_\Omega(\mathbf{X}_{i} \mathbf{Y}_{i}))\mathbf{Y}_{i}^T, \\
&t_{x_i} = \frac{\| \bigtriangledown f_{\mathbf{Y}_{i}}(\mathbf{X}_{i}) \|_F^2}{\|P_\Omega(\bigtriangledown f_{\mathbf{Y}_i}(\mathbf{X}_i)\mathbf{Y}_i)\|_F^2}.
\end{align}
\end{subequations}
After updating $\mathbf{X}$, ASD updates $\mathbf{Y}$ in a similar way:
\begin{align}
\mathbf{Y}_{i+1} 
&= \mathbf{Y}_{i}-t_{y_i}\bigtriangledown f_{\mathbf{X}_{i+1}}(\mathbf{Y}_{i}),
\end{align}
where 
\begin{subequations}
\begin{align}
&\bigtriangledown f_{\mathbf{X}_{i+1}}(\mathbf{Y}_{i}) = -\mathbf{X}_{i+1}^T(P_\Omega(\mathbf{M}) - P_\Omega(\mathbf{X}_{i+1}\mathbf{Y}_i)), \\
&t_{y_i} = \frac{\|\bigtriangledown f_{\mathbf{X}_{i+1}}(\mathbf{Y}_i)\|_F^2}{\|P_\Omega(\mathbf{X}_{i+1}\bigtriangledown f_{\mathbf{X}_{i+1}}(\mathbf{Y}_i))\|_F^2}.
\end{align}
\end{subequations}

The low-rank matrix fitting (LMaFit) algorithm finds out the solution in a different way by solving~\cite{lmafit} 
\begin{equation}
\arg\min_{\mathbf{X},\mathbf{Y},\mathbf{Z}} \{ \|\mathbf{X}\mathbf{Y}-\mathbf{Z}\|_F^2 : P_{\Omega}(\mathbf{Z}) = P_{\Omega}(\mathbf{M}) \}.
\label{AL}
\end{equation} 
With the arbitrary input of $\mathbf{X}_0\in \mathbb{R}^{n_1\times r}$ and $\mathbf{Y}_0\in \mathbb{R}^{r\times n_2}$ and $\mathbf{Z}_0 = P_\Omega(\mathbf{M})$, the variables $\mathbf{X}$, $\mathbf{Y}$, and $\mathbf{Z}$ are updated in the $i$-th iteration as
\begin{subequations}
\begin{align}
&\mathbf{X}_{i+1} = \arg\min_\mathbf{X}\|\mathbf{X}\mathbf{Y}_i-\mathbf{Z}_i\|_F^2 = \mathbf{Z}_i\mathbf{Y}^\dagger,  \\
&\mathbf{Y}_{i+1} = \arg\min_\mathbf{Y}\|\mathbf{X}_i\mathbf{Y}-\mathbf{Z}_i\|_F^2 = \mathbf{X}_{i+1}^\dagger \mathbf{Z}_i,  \\
&\mathbf{Z}_{i+1} = \mathbf{X}_{i+1}\mathbf{Y}_{i+1}+P_\Omega (\mathbf{M}- \mathbf{X}_{i+1}\mathbf{Y}_{i+1}),  
\end{align}
\end{subequations}
where $\mathbf{X}^\dagger$ is Moore-Penrose pseudoinverse of matrix $\mathbf{X}$.

Running time of the alternating minimization algorithms is very short due to the following reasons: 1) it does not require the SVD computation and 2) the size of matrices to be inverted is smaller than the size of matrices for the greedy algorithms. While the inversion of huge size matrices (size of $|\Omega|\times \mathcal{O}(1)$) is required in a greedy algorithms (see \eqref{eq:eq005}), alternating minimization only requires the pseudo inversion of $\mathbf{X}$ and $\mathbf{Y}$ (size of $n_1\times r$ and $r\times n_2$, respectively). Indeed, the computational complexity of this approach is $\mathcal{O}(r|\Omega| + r^2n_1 + r^2n_2)$, which is much smaller than that of SVT and ADMiRA when $r\ll \min(n_1, n_2)$. Also, the iteration number of ASD and LMaFit to achieve the $\epsilon$-approximation is $\mathcal{O}(\log(\frac{1}{\epsilon}))$~\cite{asd,lmafit}. It has been shown that alternating minimization techniques are simple to implement and also require small sized memory \cite{apm}. Major drawback of these approaches is that it might converge to the local optimum.

\vspace{.25cm}
\subsubsection{Optimization over Smooth Riemannian Manifold}
%
In many applications where the rank of a matrix is known in a priori (i.e., $\text{rank}(\mathbf{M}) = r$), one can strengthen the constraint of (\ref{eq:lsm}) by defining the feasible set, denoted by $\mathcal{F}$, as $$\mathcal{F} = \{\mathbf{X}\in\mathbb{R}^{n_1\times n_2}\: :\: \text{rank}(\mathbf{X}) = r \}.$$ Note that $\mathcal{F}$ is not a vector space\footnote{This is because if $\text{rank}(\mathbf{X})= r$ and $\text{rank}(\mathbf{Y})= r$, then $\text{rank}(\mathbf{X} + \mathbf{Y})= r$ is not necessarily true (and thus $\mathbf{X}+\mathbf{Y}$ does not need to belong $\mathcal{F}$).} and thus conventional optimization techniques cannot be used to solve the problem defined over $\mathcal{F}$. While this is bad news, a remedy for this is that $\mathcal{F}$ is a smooth Riemannian manifold~\cite{helmke1994,mishra2014}. Roughly speaking, smooth manifold is a generalization of $\mathbb{R}^{n_1\times n_2}$ on which a notion of differentiability exists. For more rigorous definition, see, e.g.,~\cite{absil2009optimization, lee2003smooth}. A smooth manifold equipped with an inner product, often called a Riemannian metric, forms a smooth Riemannian manifold. Since the smooth Riemannian manifold is a differentiable structure equipped with an inner product, one can use all necessary ingredients to solve the optimization problem with quadratic cost function, such as Riemannian gradient, Hessian matrix, exponential map, and parallel translation \cite{absil2009optimization}. Therefore, optimization techniques in $\mathbb{R}^{n_1\times n_2}$ (e.g., steepest descent, Newton method, conjugate gradient method) can be used to solve (\ref{eq:lsm}) in the smooth Riemannian manifold $\mathcal{F}$.  

In recent years, many efforts have been made to solve the matrix completion over smooth Riemannian manifolds. These works are classified by their specific choice of Riemannian manifold structure. One well-known approach is to solve (\ref{eq:lsm}) over the Grassmann manifold of orthogonal matrices\footnote{The Grassmann manifold is defined as the set of the linear subspaces in a vector space \cite{absil2009optimization}.} \cite{dai2010}. In this approach, a feasible set can be expressed as $\mathcal{F} = \{\mathbf{QR}^T:\mathbf{Q}^T\mathbf{Q} = \mathbf{I},\mathbf{Q}\in\mathbb{R}^{n_1\times r},\mathbf{R}\in\mathbb{R}^{n_2\times r}\}$ and thus solving (\ref{eq:lsm}) is to find an $n_1\times r$ orthonormal matrix $\mathbf{Q}$ satisfying
\begin{equation}
f(\mathbf{Q}) = \min\limits_{\mathbf{R}\in\mathbb{R}^{n_2\times r}} \|P_{\Omega}(\mathbf{M})-P_{\Omega}(\mathbf{QR}^T)\|_F^2 = 0.
\label{eq:eq104}
\end{equation} 
In~\cite{dai2010}, an approach to solve \eqref{eq:eq104} over the Grassmann manifold has been proposed. 

Recently, it has been shown that the original matrix can be reconstructed by the unconstrained optimization over the smooth Riemannian manifold $\mathcal{F}$ \cite{bart2013}. Often, $\mathcal{F}$ is expressed using the singular value decomposition as
\begin{align} \label{eq:eqRM001}
\mathcal{F} 
&= \{\mathbf{U}\boldsymbol\Sigma\mathbf{V}^T: \mathbf{U}\in\mathbb{R}^{n_1\times r},\mathbf{V}\in\mathbb{R}^{n_2\times r},\boldsymbol\Sigma\succeq 0, \nonumber\\
&~~~~~~ \mathbf{U}^T\mathbf{U} = \mathbf{V}^T\mathbf{V} = \mathbf{I}, \boldsymbol\Sigma = \text{diag}([\sigma_1 \ \cdots \ \sigma_r]) \}.
\end{align}  
The FNM problem (\ref{eq:lsm}) can then be reformulated as an unconstrained optimization over $\mathcal{F}$:
\begin{align} 
\underset{\mathbf{X}\in\mathcal{F}}{\min}~~~~~\frac{1}{2} \| P_{\Omega}(\mathbf{M}) - P_{\Omega}(\mathbf{X})\|_{F}^{2}.
\end{align}
One can easily obtain the closed-form expression of the ingredients such as tangent spaces, Riemannian metric, Riemannian gradient, and Hessian matrix in the unconstrained optimization~\cite{helmke1994, absil2009optimization, lee2003smooth}. In fact, major benefits of the Riemannian optimization-based LRMC techniques are the simplicity in implementation and the fast convergence. 
Similar to ASD, the computational complexity per iteration of these techniques is $\mathcal{O}(r|\Omega|+r^2n_1+r^2n_2)$, and they require $\mathcal{O}(\log(\frac{1}{\epsilon}))$ iterations to achieve the $\epsilon$-approximation solution~\cite{bart2013}.

%
\subsubsection{Truncated NNM}
%
Truncated NNM is a variation of the NNM-based technique requiring the rank information $r$.\footnote{Although truncated NNM is a variant of NNM, we put it into the second category since it exploits the rank information of a low-rank matrix.} While the NNM technique takes into account all the singular values of a desired matrix, truncated NNM considers only the $n-r$ smallest singular values \cite{truncatedNNM}. Specifically, truncated NNM finds a solution to
\begin{equation} \label{eq:original truncated NNM formulation}
\begin{split}
    \min\limits_\mathbf{X} &~~~~~ \| \mathbf{X} \|_{r} \\
    \text{subject to} &~~~~~ P_{\Omega} (\mathbf{X}) = P_{\Omega} (\mathbf{M}), 
\end{split}    
\end{equation}
where $\|\mathbf{X}\|_r = \sum\limits_{i=r+1}^{n}\sigma_{i}(\mathbf{X})$. We recall that $\sigma_{i}(\mathbf{X})$ is the $i$-th largest singular value of $\mathbf{X}$. Using~\cite{truncatedNNM}
\begin{align}
\underset{i=1}{\overset{r}{\sum}} \sigma_{i}
&= \underset{\mathbf{U}^{T}\mathbf{U} = \mathbf{V}^{T}\mathbf{V} = \mathbf{I}_{r}}{\max} \text{tr}(\mathbf{U}^{T} \mathbf{X} \mathbf{V}),
\end{align}
we have
\begin{align} \label{eq:alternate expression of truncated nuclear norm}
\| \mathbf{X} \|_{r}
&= \| \mathbf{X} \|_{\ast} - \underset{\mathbf{U}^{T}\mathbf{U} = \mathbf{V}^{T}\mathbf{V} = \mathbf{I}_{r}}{\max} \text{tr}(\mathbf{U}^{T} \mathbf{X} \mathbf{V}),
\end{align}
and thus the problem \eqref{eq:original truncated NNM formulation} can be reformulated to 
\begin{equation} \label{eq:alternate form of truncated NNM formulation}
\begin{split}
    \min\limits_\mathbf{X} &~~~~~ \| \mathbf{X} \|_{\ast} - \underset{\mathbf{U}^{T}\mathbf{U} = \mathbf{V}^{T}\mathbf{V} = \mathbf{I}_{r}}{\max} \text{tr}(\mathbf{U}^{T} \mathbf{X} \mathbf{V}) \\
    \text{subject to} &~~~~~ P_{\Omega} (\mathbf{X}) = P_{\Omega} (\mathbf{M}), 
\end{split}    
\end{equation}
This problem can be solved in an iterative way. Specifically, starting from $\mathbf{X}_{0} = P_{\Omega}(\mathbf{M})$, truncated NNM updates $\mathbf{X}_{i}$ by solving~\cite{truncatedNNM}
\begin{align} \label{eq:iterative way to solve the truncated NNM problem}
\begin{split}
    \min\limits_\mathbf{X} &~~~~~ \|\mathbf{X}\|_\ast - \text{tr}(\mathbf{U}_{i-1}^T\mathbf{X}\mathbf{V}_{i-1}) \\
    \text{subject to} &~~~~~ P_{\Omega} (\mathbf{X}) = P_{\Omega} (\mathbf{M}),
\end{split}
\end{align}
where $\mathbf{U}_{i-1}, \mathbf{V}_{i-1} \in \mathbb{R}^{n \times r}$ are the matrices of left and right-singular vectors of $\mathbf{X}_{i-1}$, respectively.
We note that an approach in \eqref{eq:iterative way to solve the truncated NNM problem} has two main advantages: 1) the rank information of the desired matrix can be incorporated and 2) various gradient-based techniques including alternating direction method of multipliers (ADMM)~\cite{tao2011,lin2011}, ADMM with an adaptive penalty (ADMMAP) \cite{he2000}, and accelerated proximal gradient line search method (APGL)\cite{beck2009} can be employed. Note also that the dominant operation is the truncated SVD operation and its complexity is $\mathcal{O}(rn_1n_2)$, which is much smaller than that of the NNM technique (see Table \ref{tab:tab001}) as long as $r\ll\min(n_1,n_2)$.
Similar to SVT, the iteration complexity of the truncated NNM to achieve the $\epsilon$-approximation is $\mathcal{O}(\frac{1}{\sqrt{\epsilon}})$~\cite{truncatedNNM}.
Alternatively, the difference of two convex functions (DC) based algorithm can be used to solve \eqref{eq:alternate form of truncated NNM formulation}~\cite{DCA}. In Table~\ref{tab:truncated NNM}, we summarize the truncated NNM algorithm.

\setlength{\arrayrulewidth}{1pt}
\begin{table}
  \centering
\caption{Truncated NNM} \label{tab:truncated NNM}
\vspace{-2mm}
\begin{tabular}{@{}ll}
\hline \vspace{-7pt} \\
\textbf{Input}       & observed entries $P_\Omega(\mathbf{M}) \in \mathbb{R}^{n \times n}$, \\
					 & rank of a desired low-rank matrix $r$, \\
					 & and stopping threshold $\epsilon > 0$\\
\textbf{Initialize}  & iteration counter $k = 0$, \\
					 & and $\mathbf{X}_{0} = P_{\Omega}(\mathbf{M})$ \\
\hline \vspace{-7pt} \\
\textbf{While}		 & $\| \mathbf{X}_{k} - \mathbf{X}_{k-1} \|_{F} > \epsilon$~~\textbf{do} \\
					 & $[\mathbf{U}_{k}, \mathbf{\Sigma}_{k}, \mathbf{V}_{k}] = \text{svd}(\mathbf{X}_{k})$ ($\mathbf{U}_{k}, \mathbf{V}_{k} \in \mathbb{R}^{n \times r}$) \\
					 & $\mathbf{X}^{k+1} = \underset{\mathbf{X}: P_{\Omega}(\mathbf{X}) = P_{\Omega}(\mathbf{M})}{\arg \min} \| \mathbf{X} \|_{\ast} - \text{tr}(\mathbf{U}_{k}^{T} \mathbf{X} \mathbf{V}_{k})$ \\                     
					 & $k = k+1$ \\     
\textbf{End} \\
\hline \vspace{-7pt} \\	                
\textbf{Output}  &$\mathbf{X}^{k}$ \\
\vspace{-7pt} \\
\hline
\end{tabular}
\end{table}

\section{Issues to be Considered When Using LRMC Techniques}

In this section, we study the main principles that make the recovery of a low-rank matrix possible and discuss how to exploit a special structure of a low-rank matrix in algorithm design.

\subsection{Intrinsic Properties}
There are two key properties characterizing the LRMC problem: 1) \textit{sparsity} of the observed entries and 2) \textit{incoherence} of the matrix. Sparsity indicates that an accurate recovery of the undersampled matrix is possible even when the number of observed entries is very small. Incoherence indicates that nonzero entries of the matrix should be spread out widely for the efficient recovery of a low-rank matrix. 
In this subsection, we go over these issues in detail.

\subsubsection{Sparsity of Observed Entries}
Sparsity expresses an idea that when a matrix has a low rank property, then it can be recovered using only a small number of observed entries. Natural question arising from this is how many elements do we need to observe for the accurate recovery of the matrix.
In order to answer this question, we need to know a notion of a degree of freedom (DOF). The DOF of a matrix is the number of freely chosen variables in the matrix.
One can easily see that the DOF of the rank one matrix in (\ref{eq:eq001}) is $3$ since one entry can be determined after observing three.
As an another example, consider the following rank one matrix 
\be
    \mathbf{M} = \matc{ 1 & 3 & 5 & 7 \\ 2 & 6 & 10 & 14 \\ 3 & 9 & 15 & 21 \\ 4 & 12 & 20 & 28 }.
\ee
%
One can easily see that if we observe all entries of one column and one row, then the rest can be determined by a simple linear relationship between these since $\mathbf{M}$ is the rank-one matrix. Specifically, if we observe the first row and the first column, then the first and the second columns differ by the factor of three so that as long as we know one entry in the second column, rest will be recovered. Thus, the DOF of $\mathbf{M}$ is $4 + 4 - 1 = 7$.
Following lemma generalizes our observations.

\begin{lemma} \label{thm:dof} The DOF of a square $n\times n$ matrix with rank $r$ is $2nr - r^2$. Also, the DOF of $n_1 \times n_2$-matrix is $(n_1 + n_2) r - r^2$.
\end{lemma}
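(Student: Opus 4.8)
The plan is to count the number of freely assignable entries by exploiting the fact that, generically, a rank-$r$ matrix is completely pinned down by a full-rank $r\times r$ submatrix together with the remaining entries in its rows and columns. Since the DOF is a count of free parameters on the (open, dense) set of matrices whose leading $r\times r$ block is invertible, and since permuting rows and columns changes neither the rank nor the parameter count, I may assume the top-left $r \times r$ submatrix is nonsingular. First I would partition an $n_1 \times n_2$ matrix of rank $r$ as
$$\mathbf{M} = \begin{bmatrix} \mathbf{A} & \mathbf{B} \\ \mathbf{C} & \mathbf{D} \end{bmatrix},$$
where $\mathbf{A}\in\mathbb{R}^{r\times r}$ is invertible, $\mathbf{B}\in\mathbb{R}^{r\times(n_2-r)}$, $\mathbf{C}\in\mathbb{R}^{(n_1-r)\times r}$, and $\mathbf{D}\in\mathbb{R}^{(n_1-r)\times(n_2-r)}$.

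The key step is to show that $\mathbf{D}$ is not free: it is forced to equal $\mathbf{C}\mathbf{A}^{-1}\mathbf{B}$. I would argue this from the row space. Because $\text{rank}(\mathbf{M})=r$ and the first $r$ rows already contain the invertible block $\mathbf{A}$, those rows are linearly independent and hence span the whole row space of $\mathbf{M}$. Thus the bottom $n_1-r$ rows $[\mathbf{C}\ \mathbf{D}]$ are a linear combination of the top $r$ rows $[\mathbf{A}\ \mathbf{B}]$, say $[\mathbf{C}\ \mathbf{D}]=\mathbf{T}[\mathbf{A}\ \mathbf{B}]$ for some $\mathbf{T}\in\mathbb{R}^{(n_1-r)\times r}$. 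Matching the left blocks gives $\mathbf{C}=\mathbf{T}\mathbf{A}$, hence $\mathbf{T}=\mathbf{C}\mathbf{A}^{-1}$; substituting into the right blocks yields $\mathbf{D}=\mathbf{T}\mathbf{B}=\mathbf{C}\mathbf{A}^{-1}\mathbf{B}$, as claimed. Every entry of $\mathbf{D}$ is therefore determined once $\mathbf{A}$, $\mathbf{B}$, $\mathbf{C}$ are fixed, exactly mirroring the linear-dependence reasoning used in the worked examples.

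It then remains to count and to verify there is no further redundancy. The blocks $\mathbf{A}$, $\mathbf{B}$, $\mathbf{C}$ contribute $r^2$, $r(n_2-r)$, and $(n_1-r)r$ entries, for a total of $r^2+r(n_2-r)+(n_1-r)r=(n_1+n_2)r-r^2$. To confirm these parameters are genuinely independent (so the count is exact rather than an overcount), I would check the converse: for \emph{any} invertible $\mathbf{A}$ and arbitrary $\mathbf{B},\mathbf{C}$, the matrix built with $\mathbf{D}=\mathbf{C}\mathbf{A}^{-1}\mathbf{B}$ has rank exactly $r$, since all its rows lie in the span of the $r$ independent rows $[\mathbf{A}\ \mathbf{B}]$. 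Setting $n_1=n_2=n$ gives the square-case value $2nr-r^2$, completing the proof.

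Finally, I would note an alternative, more conceptual route yielding the same answer: write $\mathbf{M}=\mathbf{X}\mathbf{Y}$ with $\mathbf{X}\in\mathbb{R}^{n_1\times r}$ and $\mathbf{Y}\in\mathbb{R}^{r\times n_2}$, giving $(n_1+n_2)r$ raw parameters, then subtract the $r^2$-dimensional redundancy from the transformation $(\mathbf{X},\mathbf{Y})\mapsto(\mathbf{X}\mathbf{G},\mathbf{G}^{-1}\mathbf{Y})$ for invertible $\mathbf{G}\in\mathbb{R}^{r\times r}$. I expect the main obstacle in either route to be the same: rigorously justifying that the count is \emph{exact}, that is, that the forced relation $\mathbf{D}=\mathbf{C}\mathbf{A}^{-1}\mathbf{B}$ (respectively, the $GL(r)$ gauge freedom) accounts for \emph{all} redundancy and imposes no hidden constraints among the remaining parameters.
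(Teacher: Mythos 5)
Your proof is correct, and it takes a recognizably different route from the paper's. The paper parametrizes a rank-$r$ matrix by its first $r$ (linearly independent) columns, contributing $nr$ free entries, plus the $r$ linear-combination coefficients needed to express each of the remaining $n-r$ columns, contributing $(n-r)r$ more; the free parameters are thus a mix of matrix entries and abstract coefficients. You instead parametrize by actual matrix entries in an L-shaped region: an invertible leading block $\mathbf{A}$ together with $\mathbf{B}$ and $\mathbf{C}$, and you show the remaining block is forced to be the Schur-type product $\mathbf{D}=\mathbf{C}\mathbf{A}^{-1}\mathbf{B}$. Your version buys two things the paper's sketch does not make explicit: (i) it identifies concretely \emph{which entries} of the matrix can be freely observed and which are determined, which dovetails with the paper's later discussion of recovering a matrix from its first $r$ rows and columns (Fig.~5), and (ii) it verifies the converse, that any choice of $(\mathbf{A},\mathbf{B},\mathbf{C})$ with $\mathbf{A}$ invertible yields a matrix of rank exactly $r$, so the count is exact and not an overcount. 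Your closing remark about the factorization $\mathbf{M}=\mathbf{X}\mathbf{Y}$ with the $GL(r)$ gauge redundancy is a third, equally standard route to the same number. Both your argument and the paper's share the same genericity caveat (the paper assumes the first $r$ columns are independent; you assume the leading $r\times r$ minor is nonsingular after a permutation), which is the appropriate level of rigor for a degrees-of-freedom count.
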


\begin{proof}
Since the rank of a matrix is $r$,  we can freely choose values for all entries of the $r$ columns, resulting in $nr$ degrees of freedom for the first $r$ column.
Once $r$ independent columns, say $\mathbf{m}_1, \cdots \mathbf{m}_r$, are constructed, then each of the rest $n-r$ columns is expressed as a linear combinations of the first $r$ columns (e.g., $\mathbf{m}_{r+1} = \alpha_1 \mathbf{m}_1 + \cdots + \alpha_r \mathbf{m}_r$) so that $r$ linear coefficients ($\alpha_1, \cdots \alpha_r$) can be freely chosen in these columns.
By adding $nr$ and $(n-r)r$, we obtain the desired result. Generalization to $n_1\times n_2$ matrix is straightforward.
\end{proof}



\begin{figure*} [t!]
\label{fig:mc_illustration}
\centering
\subfigure[] 
{\includegraphics[width=0.45\textwidth]{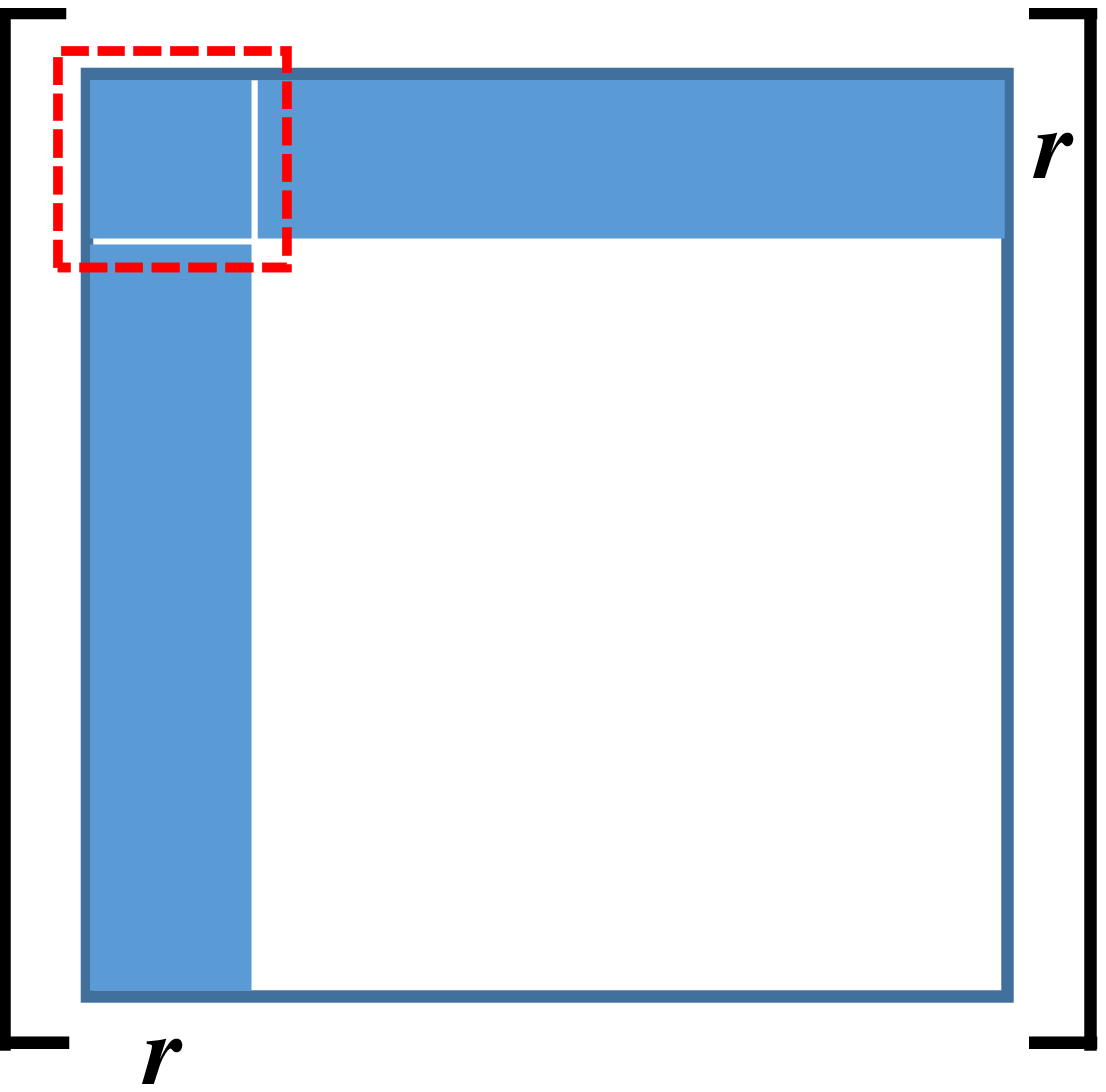} } 
\subfigure[] 
{\includegraphics[width=0.45\textwidth]{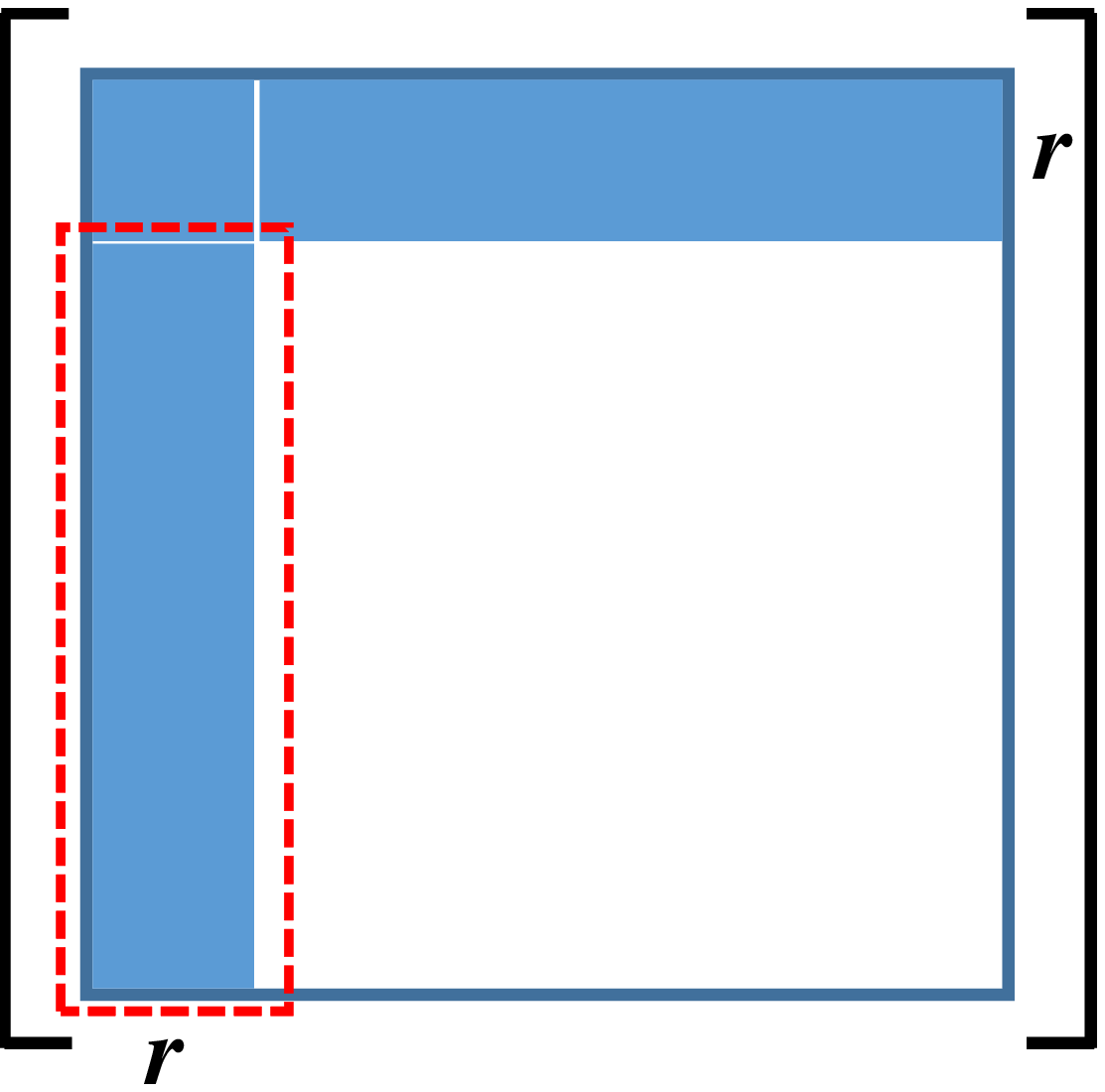} }   
\caption {LRMC with colored entries being observed. The dotted boxes are used to compute: (a) linear coefficients and (b) unknown entries.} 
\end{figure*}


This lemma says that if $n$ is large and $r$ is small enough (e.g., $r = O(1)$), essential information in a matrix is just in the order of $n$, DOF$=O(n)$, which is clearly much smaller than the total number of entries of the matrix. 
Interestingly, the DOF is the minimum number of observed entries required for the recovery of a matrix. If this condition is violated, that is, if the number of observed entries is less than the DOF (i.e., $m < 2nr - r^2$), no algorithm whatsoever can recover the matrix.
In Fig. \ref{fig:mc_illustration}, we illustrate how to recover the matrix when the number of observed entries equals the DOF. In this figure, we assume that blue colored entries are observed.\footnote{Since we observe the first $r$ rows and columns, we have $2nr-r^2$ observations in total.}
In a nutshell, unknown entries of the matrix are found in two-step process. First, we identify the linear relationship between the first $r$ columns and the rest. For example, the $(r+1)$-th column can be expressed as a linear combination of the first $r$ columns. That is,
\begin{align} \label{eq:linear}
\mathbf{m}_{r+1} 
&= \alpha_1 \mathbf{m}_1 + \cdots + \alpha_r \mathbf{m}_r.
\end{align}
Since the first $r$ entries of $\mathbf{m}_1, \cdots \mathbf{m}_{r+1}$ are observed (see Fig. \ref{fig:mc_illustration}(a)), we have $r$ unknowns $(\alpha_1, \cdots, \alpha_r)$ and $r$ equations so that we can identify the linear coefficients $\alpha_1, \cdots \alpha_r$ with the computational cost $\mathcal{O}(r^3)$ of an $r\times r$ matrix inversion. Once these coefficients are identified, we can recover the unknown entries  $m_{r+1,r+1} \cdots m_{r+1,n}$ of $\mathbf{m}_{r+1}$ using the linear relationship in (\ref{eq:linear}) (see Fig. \ref{fig:mc_illustration}(b)).
By repeating this step for the rest of columns, we can identify all unknown entries with $\mathcal{O}(rn^2)$ computational complexity\footnote{For each unknown entry, it needs $r$ multiplication and $r - 1$ addition operations. Since the number of unknown entries is $(n-r)^2$, the computational cost is $(2r-1)(n-r)^2$. Recall that $\mathcal{O}(r^3)$ is the cost of computing $(\alpha_1, \cdots, \alpha_r)$ in \eqref{eq:linear}. Thus, the total cost is $\mathcal{O}(r^3 + (2r-1)(n-r)^2) = \mathcal{O}(rn^2)$.}.


\begin{figure*} [t!]
\label{fig:mc_illustration4}
\centering    
\includegraphics[width=0.4\textwidth]{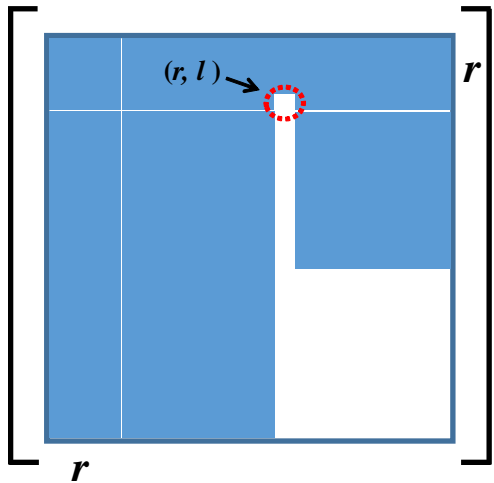} 
\caption {An illustration of the worst case of LRMC.} 
\end{figure*}


Now, an astute reader might notice that this strategy will not work if one entry of the column (or row) is unobserved. As illustrated in Fig. \ref{fig:mc_illustration4}, if only one entry in the $r$-th row, say $(r,l)$-th entry, is unobserved, then one cannot recover the $l$-th column simply because the matrix in Fig. \ref{fig:mc_illustration4} cannot be converted to the matrix form in Fig. \ref{fig:mc_illustration}(b).
It is clear from this discussion that the measurement size being equal to the DOF is not enough for the most cases and in fact it is just a necessary condition for the accurate recovery of the rank-$r$ matrix. This seems like a depressing news. However, DOF is in any case important since it is a fundamental limit (lower bound) of the number of observed entries to ensure the exact recovery of the matrix. Recent results show that the DOF is not much different from the number of measurements ensuring the recovery of the matrix~\cite{candes_recht,recht}.\footnote{In \cite{recht}, it has been shown that the required number of entries to recover the matrix using the nuclear-norm minimization is in the order of $n^{1.2}$ when the rank is $O(1)$.} 

\subsubsection{Coherence}
\label{sec:coherence}


\begin{figure*} [t!]
\label{fig:coherence}
\centering  
\subfigure[]  
{\includegraphics[width=0.4\textwidth]{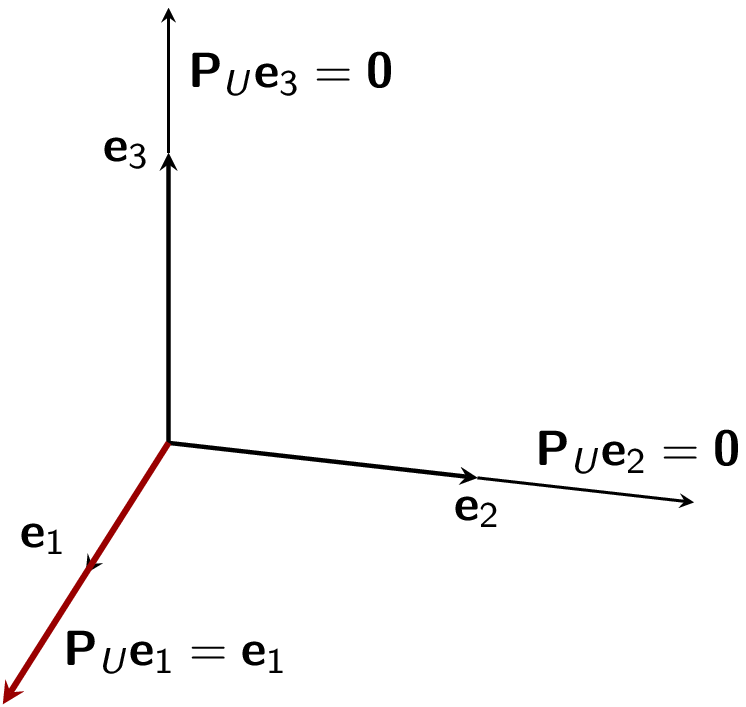} }  
\subfigure[]  
{\includegraphics[width=0.5\textwidth]{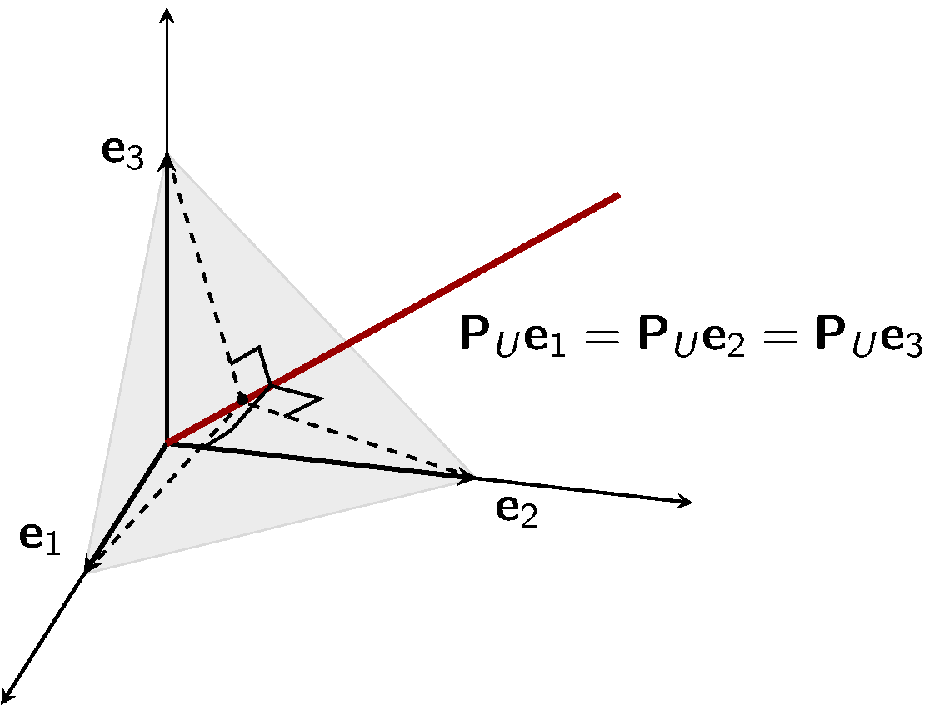} } 
\caption {Coherence of matrices in (\ref{eq:example1}) and (\ref{eq:example2}): (a) maximum and (b) minimum.} 
\end{figure*}

If nonzero elements of a matrix are concentrated in a certain region, we generally need a large number of observations to recover the matrix. On the other hand, if the matrix is spread out widely, then the matrix can be recovered with a relatively small number of entries. For example, consider the following two rank-one matrices in $\mathbb{R}^{n\times n}$
        \be
        \mathbf{M}_1 =  \matc{1 &  1 & 0 & \cdots & 0 \\
                            1 &  1 & 0 & \cdots & 0 \\
                            0 &  0 & 0 & \cdots & 0 \\
                            \vdots & \vdots & \vdots & \ddots & \vdots  \\
                            0 &  0 & 0 & \cdots & 0},\nonumber
        \ee  
        \be
        \mathbf{M}_2 = \matc{1 &  1 & 1 & \cdots & 1 \\
                            1 &  1 & 1 & \cdots & 1 \\
                            1 &  1 & 1 & \cdots & 1 \\
                            \vdots & \vdots & \vdots & \ddots & \vdots  \\
                            1 &  1 & 1 & \cdots & 1}.\nonumber.
        \ee      
        The matrix $\mathbf{M}_1$ has only four nonzero entries at the top-left corner. Suppose $n$ is large, say $n = 1000$, and all entries but the four elements in the top-left corner are observed (99.99\% of entries are known). In this case, even though the rank of a matrix is just one, there is no way to recover this matrix since the information bearing entries is missing. This tells us that although the rank of a matrix is very small, one might not recover it if nonzero entries of the matrix are concentrated in a certain area. 
        
        In contrast to the matrix $\mathbf{M}_1$, one can accurately recover the matrix $\mathbf{M}_2$ with only $2n-1$ (= DOF) known entries. In other words, one row and one column are enough to recover $\mathbf{M}_2$). One can deduce from this example that the spread of observed entries is important for the identification of unknown entries.
        
        In order to quantify this, we need to measure the concentration of a matrix. Since the matrix has two-dimensional structure, we need to check the concentration in both row and column directions. This can be done by checking the concentration in the left and right singular vectors. Recall that the SVD of a matrix is
        \be
            \mathbf{M} = \mathbf{U} \mathbf{\Sigma} \mathbf{V}^T = \sum_{i=1}^{r} \sigma_i \mathbf{u}_i \mathbf{v}_i^T
            \label{eq:svd}
        \ee
        where $\mathbf{U} = [\mathbf{u}_1 \ \cdots \ \mathbf{u}_r]$ and $\mathbf{V} = [\mathbf{v}_1 \ \cdots \ \mathbf{v}_r]$ are the matrices constructed by the left and right singular vectors, respectively, and $\mathbf{\Sigma}$ is the diagonal matrix whose diagonal entries are $\sigma_i$.
        From (\ref{eq:svd}), we see that the concentration on the vertical direction (concentration in the row) is determined by $\mathbf{u}_i$ and that on the horizontal direction (concentration in the column) is determined by $\mathbf{v}_i$.
        For example, if one of the standard basis vector $\mathbf{e}_i$, say $\mathbf{e}_1 = [1 ~0 \cdots ~0]^T$, lies on the space spanned by $\mathbf{u}_1, \cdots \mathbf{u}_r$ while others ($\mathbf{e}_2, \mathbf{e}_3, \cdots$) are orthogonal to this space, then it is clear that nonzero entries of the matrix are only on the first row.
        In this case, clearly one cannot infer the entries of the first row from the sampling of the other row. That is, it is not possible to recover the matrix without observing the entire entries of the first row.

        The coherence, a measure of concentration in a matrix, is formally defined as \cite{recht}
        \be
            \mu(\mathbf{U}) = \frac{n}{r} \max\limits_{1\leq i \leq n} \|P_\mathbf{U}\mathbf{e}_i\|^2
            \label{eq:coherence}
        \ee
        where $\mathbf{e}_i$ is standard basis and $P_\mathbf{U}$ is the projection onto the range space of $\mathbf{U}$. Since the columns of $\mathbf{U} = [\mathbf{u}_1 ~ \cdots ~ \mathbf{u}_r ]$ are orthonormal, we have $$P_\mathbf{U} = \mathbf{U} \mathbf{U}^{\dagger} = \mathbf{U}(\mathbf{U}^T\mathbf{U})^{-1}\mathbf{U}^T = \mathbf{U}\mathbf{U}^T.$$ 
        Note that both $\mu(\mathbf{U})$ and $\mu(\mathbf{V})$ should be computed to check the concentration on the vertical and horizontal directions.

\begin{lemma} (Maximum and minimum value of $\mu(\mathbf{U})$) \label{thm:max_min}
 $\mu(\mathbf{U})$ satisfies
\begin{equation}
  1 \leq \mu(\mathbf{U}) \leq \frac{n}{r}
  \label{eq:bound1}
\end{equation}
\end{lemma}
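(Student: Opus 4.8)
The plan is to prove the two bounds separately, with the key observation being that $\|P_{\mathbf{U}}\mathbf{e}_i\|^2$ equals the $i$-th diagonal entry of the projection matrix $P_{\mathbf{U}} = \mathbf{U}\mathbf{U}^T$, so that summing these quantities over $i$ reduces everything to a trace computation. Since $P_{\mathbf{U}}$ is symmetric and idempotent ($P_{\mathbf{U}}^T P_{\mathbf{U}} = P_{\mathbf{U}}$), I would first record the identity
\begin{equation}
\|P_{\mathbf{U}}\mathbf{e}_i\|^2 = \mathbf{e}_i^T P_{\mathbf{U}}^T P_{\mathbf{U}} \mathbf{e}_i = \mathbf{e}_i^T P_{\mathbf{U}} \mathbf{e}_i = [P_{\mathbf{U}}]_{ii}.
\end{equation}

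For the upper bound I would use the fact that an orthogonal projection is a contraction, so $\|P_{\mathbf{U}}\mathbf{e}_i\|^2 \le \|\mathbf{e}_i\|^2 = 1$ for every $i$; taking the maximum over $i$ and multiplying by $n/r$ immediately yields $\mu(\mathbf{U}) \le n/r$. For the lower bound, I would sum the identity above over all $i$ and invoke the orthonormality of the columns of $\mathbf{U}$:
\begin{equation}
\sum_{i=1}^{n} \|P_{\mathbf{U}}\mathbf{e}_i\|^2 = \sum_{i=1}^{n} [P_{\mathbf{U}}]_{ii} = \text{tr}(\mathbf{U}\mathbf{U}^T) = \text{tr}(\mathbf{U}^T\mathbf{U}) = \text{tr}(\mathbf{I}_r) = r.
\end{equation}
Since the maximum of $n$ nonnegative numbers is at least their average, this gives $\max_{1\le i \le n}\|P_{\mathbf{U}}\mathbf{e}_i\|^2 \ge r/n$, and hence $\mu(\mathbf{U}) = \frac{n}{r}\max_i \|P_{\mathbf{U}}\mathbf{e}_i\|^2 \ge 1$, completing the argument.

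There is no serious obstacle here: both inequalities follow from elementary properties of orthogonal projections, and the only mild insight is recognizing that the normalizing factor $n/r$ in the definition is exactly calibrated so that the average value of $\|P_{\mathbf{U}}\mathbf{e}_i\|^2$ produces the constant $1$. If I wanted to make the statement sharper I would also note when equality holds: the upper bound is attained when some standard basis vector $\mathbf{e}_i$ lies entirely in the range of $\mathbf{U}$ (maximal concentration, as in the $\mathbf{M}_1$ example), while the lower bound is attained when all diagonal entries $[P_{\mathbf{U}}]_{ii}$ are equal to $r/n$ (the coherence is minimized, as in the $\mathbf{M}_2$ example), but these remarks are not needed for the bounds as stated.
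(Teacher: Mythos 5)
Your proof is correct and follows essentially the same route as the paper's: the upper bound from the contraction property of orthogonal projections, and the lower bound from bounding the maximum by the average of $\|P_{\mathbf{U}}\mathbf{e}_i\|_2^2$, whose sum equals $r$ (you compute this via $\operatorname{tr}(\mathbf{U}^T\mathbf{U})=r$, the paper via $\sum_{i}\sum_{j}|u_{ij}|^2$ using unit-norm columns — the same computation). No gaps.
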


\begin{proof}
The upper bound is established by noting that $\ell_2$-norm of the projection is not greater than the original vector ($\| P_\mathbf{U} \mathbf{e}_i \|_2^2 \leq \|  \mathbf{e}_i \|_2^2$). The lower bound is because
\begin{align*}
    \max_i \| P_\mathbf{U} \mathbf{e}_i \|_2^2 
    &\ge \frac{1}{n} \sum_{i=1}^{n} \| P_\mathbf{U} \mathbf{e}_i \|_2^2 \nonumber  \\
    &= \frac{1}{n} \sum_{i=1}^{n} \mathbf{e}_i^T P_\mathbf{U} \mathbf{e}_i \nonumber \\
    &= \frac{1}{n} \sum_{i=1}^{n} \mathbf{e}_i^T \mathbf{U} \mathbf{U}^T \mathbf{e}_i \nonumber \\
    &= \frac{1}{n} \sum_{i=1}^{n} \sum_{j=1}^{r} | u_{ij} |^2 \nonumber \\
    &= \frac{r}{n} \nonumber
\end{align*}
where the first equality is due to the idempotency of $P_\mathbf{U}$ (i.e., $P_\mathbf{U}^T P_\mathbf{U} = P_\mathbf{U}$) and the last equality is because $\sum_{i=1}^{n} | u_{ij} |^2 = 1$.
\end{proof}

Coherence is maximized when the nonzero entries of a matrix are concentrated in a row (or column). For example, consider the matrix whose nonzero entries are concentrated on the first row
\begin{align}
\mathbf{M} 
&= \matc{3 & 2 & 1\\ 0 & 0 & 0\\ 0 & 0 & 0}. \label{eq:example1}
\end{align}
Note that the SVD of $\mathbf{M}$ is
\begin{align*}
    \mathbf{M} 
    &= \sigma_1\mathbf{u}_1\mathbf{v}_1^T = 3.8417 \matc{1 \\ 0 \\ 0} [0.8018 \ 0.5345 \ 0.2673].
\end{align*}
Then, $\mathbf{U} = [1 \ 0 \ 0]^{T}$, and thus $\| P_\mathbf{U} \mathbf{e}_1 \|_2 = 1$ and $\| P_\mathbf{U} \mathbf{e}_2 \|_2 = \| P_\mathbf{U} \mathbf{e}_3 \|_2 = 0$. As shown in Fig. \ref{fig:coherence}(a), the standard basis $\mathbf{e}_1$ lies on the space spanned by $\mathbf{U}$ while others are orthogonal to this space so that the maximum coherence is achieved ($\max_i \| P_\mathbf{U} \mathbf{e}_i \|_2^2 = 1$ and $\mu(\mathbf{U}) = 3$).
%


In contrast, coherence is minimized when the nonzero entries of a matrix are spread out widely. Consider the matrix
\begin{align}
\mathbf{M} 
&= \matc{2 & 1 & 0\\ 2 & 1 & 0\\ 2 & 1 & 0}.
\label{eq:example2}
\end{align}
In this case, the SVD of $\mathbf{M}$ is
\begin{align}
\mathbf{M} 
&= 3.8730 \matc{-0.5774 \\ -0.5774 \\ -0.5774 } [-0.8944 \ -0.4472 \ 0]. \nonumber
\end{align}
Then, we have
\begin{align*}
P_\mathbf{U} 
&= \mathbf{U} \mathbf{U}^{T} = \frac{1}{3}\matc{1 & 1 & 1\\ 1 & 1 & 1\\ 1 & 1 & 1},
\end{align*} 
and thus $\| P_\mathbf{U} \mathbf{e}_1 \|_2^2 = \| P_\mathbf{U} \mathbf{e}_2 \|_2 = \| P_\mathbf{U} \mathbf{e}_3 \|_2 = \frac{1}{3}$.
In this case, as illustrated in Fig. \ref{fig:coherence}(b), $\| P_\mathbf{U} \mathbf{e}_i \|_2$ is the same for all standard basis vector $\mathbf{e}_i$, achieving lower bound in (\ref{eq:bound1}) and the minimum coherence ($\max_i \| P_\mathbf{U} \mathbf{e}_i \|_2^2 = \frac{1}{3}$ and $\mu(\mathbf{U}) = 1$).
As discussed in \eqref{eq:eq213}, the number of measurements to recover the low-rank matrix is proportional to the coherence of the matrix~\cite{candes_recht,candes_tao,recht}.

\vspace{.25cm}
\subsection{Working With Different Types of Low-Rank Matrices}
\label{sec:sec001}

In many practical situations where the matrix has a certain structure, we want to make the most of the given structure to maximize profits in terms of performance and computational complexity. We go over several cases including LRMC of the PSD matrix~\cite{bart2009}, Euclidean distance matrix~\cite{luong2019}, and recommendation matrix~\cite{monti2017} and discuss how the special structure can be exploited in the algorithm design.

\subsubsection{Low-Rank PSD Matrix Completion}
%
In some applications, a desired matrix $\mathbf{M} \in \mathbb{R}^{n \times n}$ not only has a low-rank structure but also is positive semidefinite (i.e., $\mathbf{M} = \mathbf{M}^{T}$ and $\mathbf{z}^{T} \mathbf{M} \mathbf{z} \ge 0$ for any vector $\mathbf{z}$). In this case, the problem to recover $\mathbf{M}$ can be formulated as
\begin{equation} \label{eq:eq009}
\begin{split}
    \min\limits_\mathbf{X} &~~~~~ \text{rank}(\mathbf{X}) \\
    \text{subject to} &~~~~~ P_{\Omega} (\mathbf{X}) = P_{\Omega} (\mathbf{M}), \\
    				  &~~~~~ \mathbf{X} = \mathbf{X}^{T},~\mathbf{X} \succeq 0. 
\end{split}
\end{equation}
Similar to the rank minimization problem~\eqref{eq:rankmin}, the problem~\eqref{eq:eq009} can be relaxed using the nuclear norm, and the relaxed problem can be solved via SDP solvers.

The problem~\eqref{eq:eq009} can be simplified if the rank of a desired matrix is known in advance. Let $\text{rank}(\mathbf{M}) = k$. Then, since $\mathbf{M}$ is positive semidefinite, there exists a matrix $\mathbf{Z} \in \mathbb{R}^{n \times k}$ such that $\mathbf{M} = \mathbf{Z} \mathbf{Z}^{T}$. Using this, the problem~\eqref{eq:eq009} can be concisely expressed as 
\begin{equation} \label{eq:eq010}
\begin{split}
\min\limits_{\mathbf{Z}\in\mathbb{R}^{n\times k}} &~~~~~ \frac{1}{2}\| P_{\Omega} (\mathbf{M}) - P_{\Omega} (\mathbf{ZZ}^T) \|_{F}^2 . 
\end{split}
\end{equation}
Since~\eqref{eq:eq010} is an unconstrained optimization problem with a differentiable cost function, many gradient-based techniques such as steepest descent, conjugate gradient, and Newton methods can be applied. It has been shown that under suitable conditions of the coherence property of $\mathbf{M}$ and the number of the observed entries $|\Omega|$, the global convergence of gradient-based algorithms is guaranteed~\cite{ge2016}.

\vspace{.25cm}
\subsubsection{Euclidean Distance Matrix Completion}
%
Low-rank Euclidean distance matrix completion arises in the localization problem (e.g., sensor node localization in IoT networks). Let $\{\mathbf{z}_i\}_{i=1}^{n}$ be sensor locations in the $k$-dimensional Euclidean space ($k = 2$ or $k = 3$). Then, the Euclidean distance matrix $\mathbf{M} = (m_{ij}) \in \mathbb{R}^{n \times n}$ of sensor nodes is defined as $m_{ij} = \| \mathbf{z}_{i} - \mathbf{z}_{j} \|_{2}^{2}$. It is obvious that $\mathbf{M}$ is symmetric with diagonal elements being zero (i.e., $m_{ii}=0$). As mentioned, the rank of the Euclidean distance matrix $\mathbf{M}$ is at most $k+2$ (i.e., $\text{rank}(\mathbf{M}) \le k+2$). Also, one can show that a matrix $\mathbf{D} \in \mathbb{R}^{n \times n}$ is a Euclidean distance matrix if and only if $\mathbf{D} = \mathbf{D}^{T}$ and~\cite{dattorro2005} 
\begin{equation} \label{eq:eq011}
(\mathbf{I}_{n}-\frac{1}{n}\mathbf{hh}^T)\mathbf{D}(\mathbf{I}_{n}-\frac{1}{n}\mathbf{hh}^T)\preceq 0,
\end{equation}    
where $\mathbf{h} = [1 \ 1 \cdots \ 1]^{T} \in \mathbb{R}^{n}$. Using these, the problem to recover the Euclidean distance matrix $\mathbf{M}$ can be formulated as
\begin{equation} \label{eq:eq013}
\begin{split}
    \min\limits_\mathbf{D} &~~~~~ \|P_{\Omega} (\mathbf{D}) - P_{\Omega} (\mathbf{M})\|_F^2 \\
    \text{subject to}      &~~~~~ \text{rank}(\mathbf{D}) \leq k+2, \\
				    	   &~~~~~ \mathbf{D} = \mathbf{D}^T, \\
    					   &~~~~~ -(\mathbf{I}_{n}-\frac{1}{n}\mathbf{hh}^T)\mathbf{D}(\mathbf{I}_{n}-\frac{1}{n}\mathbf{hh}^T) \succeq 0.  
\end{split}
\end{equation}
Let $\mathbf{Y} = \mathbf{Z} \mathbf{Z}^{T}$ where $\mathbf{Z} = [ \mathbf{z}_{1} \ \cdots \ \mathbf{z}_{n} ]^{T} \in \mathbb{R}^{n \times k}$ is the matrix of sensor locations. Then, one can easily check that 
\begin{align} \label{eq:eq014}
\mathbf{M} = \text{diag}(\mathbf{Y})\mathbf{h}^T + \mathbf{h}\text{diag}(\mathbf{Y})^T - 2\mathbf{Y}.
\end{align}     
Thus, by letting $g(\mathbf{Y}) = \text{diag}(\mathbf{Y})\mathbf{h}^{T} + \mathbf{h}\text{diag}(\mathbf{Y})^{T} - 2\mathbf{Y}$, the problem in~\eqref{eq:eq013} can be equivalently formulated as
\begin{align} \label{eq:eq015}
\begin{split}
    \min\limits_\mathbf{Y} &~~~~~ \|P_{\Omega} (g(\mathbf{Y})) - P_{\Omega} (\mathbf{M})\|_F^2 \\
    \text{subject to} &~~~~~ \text{rank}(\mathbf{Y}) \leq k, \\
    				  &~~~~~ \mathbf{Y} = \mathbf{Y}^T,~\mathbf{Y} \succeq 0.
\end{split}  
\end{align}
Since the feasible set associated with the problem in~\eqref{eq:eq015} is a smooth Riemannian manifold~\cite{helmke1994, bart2009}, an extension of the Euclidean space on which a notion of differentiation exists~\cite{absil2009optimization, lee2003smooth}, various gradient-based optimization techniques such as steepest descent, Newton method, and conjugate gradient algorithms can be applied to solve~\eqref{eq:eq015}~\cite{luongITA,luong2019,absil2009optimization}.


\vspace{.25cm}
\subsubsection{Convolutional Neural Network Based Matrix Completion}


In recent years, approaches to use CNN to solve the LRMC problem have been proposed. These approaches are particular useful when a desired low-rank matrix is expressed as a graph model (e.g., the recommendation matrix with a user graph to express the similarity between users' rating results)~\cite{monti2017, sedhain2015autorec, zheng2016neural, he2017neural, defferrard2016, bruna2014spectral, henaff2015deep}.
The main idea of CNN-based LRMC algorithms is to express the low-rank matrix as a graph structure and then apply CNN to the constructed graph to recover the desired matrix.


\textit{Graphical Model of a Low-Rank Matrix}: Suppose $\mathbf{M} \in \mathbb{R}^{n_{1} \times n_{2}}$ is the rating matrix in which the columns and rows are indexed by users and products, respectively. The first step of the CNN-based LRMC algorithm is to model the column and row graphs of $\mathbf{M}$ using the correlations between its columns and rows. Specifically, in the column graph $\mathcal{G}_{c}$ of $\mathbf{M}$, users are represented as vertices, and two vertices $i$ and $j$ are connected by an undirected edge if the correlation $\rho_{ij} = \frac{|\langle \mathbf{m}_{i}, \mathbf{m}_{j} \rangle |}{\| \mathbf{m}_{i} \|_{2} \| \mathbf{m}_{j} \|_{2}}$ between  the $i$ and $j$-th columns of $\mathbf{M}$ is larger than the pre-determined threshold $\epsilon$. Similarly, we construct the row graph $\mathcal{G}_{r}$ of $\mathbf{M}$ by denoting each row (product) of $\mathbf{M}$ as a vertex and then connecting strongly correlated vertices. To express the connection, we define the adjacency matrix of each graph. The adjacency matrix $\mathbf{W}_{c} = (w^{c}_{ij}) \in \mathbb{R}^{n_{2} \times n_{2}}$ of the column graph $\mathcal{G}_{c}$ is defined as
\begin{align}
w^{c}_{ij}
&= \begin{cases}
1 & \text{if the vertices (users) $i$ and $j$ are connected} \\
0 & \text{otherwise}
\end{cases}
\end{align}
The adjacency matrix $\mathbf{W}_{r} \in \mathbb{R}^{n_{1} \times n_{2}}$ of the row graph $\mathcal{G}_{r}$ is defined in a similar way.


\textit{CNN-based LRMC}: Let $\mathbf{U} \in \mathbb{R}^{n_{1} \times r}$ and $\mathbf{V} \in \mathbb{R}^{n_{2} \times r}$ be matrices such that $\mathbf{M} = \mathbf{U} \mathbf{V}^{T}$. The primary task of the CNN-based approach is to find functions $f_{r}$ and $f_{c}$ mapping the vertex sets of the row and column graphs $\mathcal{G}_{r}$ and $\mathcal{G}_{c}$ of $\mathbf{M}$ to $\mathbf{U}$ and $\mathbf{V}$, respectively. Here, each vertex of $\mathcal{G}_r$ (respective $\mathcal{G}_c$) is mapped to each row of $\mathbf{U}$ (respective $\mathbf{V}$) by $f_r$ (respective $f_c$).
Since it is difficult to express $f_{r}$ and $f_{c}$ explicitly, we can learn these nonlinear mappings using CNN-based models. 
In the CNN-based LRMC approach, $\mathbf{U}$ and $\mathbf{V}$ are initialized at random and updated in each iteration. Specifically, $\mathbf{U}$ and $\mathbf{V}$ are updated to minimize the following loss function~\cite{monti2017}:
\begin{align}
l(\mathbf{U}, \mathbf{V})
&= \sum\limits_{(i,j):w_{ij}^{r} = 1}\| \mathbf{u}_{i}-\mathbf{u}_{j} \|_{2}^{2} + \sum\limits_{(i,j):w_{ij}^{c} = 1} \| \mathbf{v}_{i} - \mathbf{v}_{j} \|_{2}^{2} \nonumber \\
&~~+\frac{\tau}{2} \| P_{\Omega}(\sum\limits_{i=1}^r \mathbf{u}_{i} \mathbf{v}_{i}^{T}) - P_{\Omega}(\mathbf{M}) \|_{F}^{2},
\label{eq:eq021}
\end{align}
where $\tau$ is a regularization parameter. In other words, we find $\mathbf{U}$ and $\mathbf{V}$ such that the Euclidean distance between the connected vertices is minimized (see $\| \mathbf{u}_{i} - \mathbf{u}_{j} \|_{2}$ ($w_{ij}^{r}=1$) and $\| \mathbf{v}_{i} - \mathbf{v}_{j} \|_{2}$ ($w_{ij}^{c}=1$) in~\eqref{eq:eq021}). 
%
%
%
%
%
The update procedures of $\mathbf{U}$ and $\mathbf{V}$ are~\cite{monti2017}: 
\begin{enumerate}
\item[1)] Initialize $\mathbf{U}$ and $\mathbf{V}$ at random and assign each row of $\mathbf{U}$ and $\mathbf{V}$ to each vertex of the row graph $\mathcal{G}_{r}$ and the column graph $\mathcal{G}_{c}$, respectively.

\item[2)] Extract the feature matrices $\Delta \mathbf{U}$ and $\Delta \mathbf{V}$ by performing a graph-based convolution operation on $\mathcal{G}_{r}$ and $\mathcal{G}_{c}$, respectively.

\item[3)] Update $\mathbf{U}$ and $\mathbf{V}$ using the feature matrices $\Delta \mathbf{U}$ and $\Delta \mathbf{V}$, respectively. 

\item[4)] Compute the loss function in~\eqref{eq:eq021} using updated $\mathbf{U}$ and $\mathbf{V}$ and perform the back propagation to update the filter parameters.

\item[5)] Repeat the above procedures until the value of the loss function is smaller than a pre-chosen threshold.
\end{enumerate}


One important issue in the CNN-based LRMC approach is to define a graph-based convolution operation to extract the feature matrices $\Delta \mathbf{U}$ and $\Delta \mathbf{V}$ (see the second step). Note that the input data $\mathcal{G}_{r}$ and $\mathcal{G}_{c}$ do not lie on regular lattices like images and thus classical CNN cannot be directly applied to $\mathcal{G}_{r}$ and $\mathcal{G}_{c}$. 
One possible option is to define the convolution operation in the Fourier domain of the graph. 
In recent years, CNN models based on the Fourier transformation of graph-structure data have been proposed~\cite{bruna2014spectral,henaff2015deep,defferrard2016,kipf2016,montigeo2017}. In \cite{bruna2014spectral}, an approach to use the eigendecomposition of the Laplacian has been proposed. To further reduce the model complexity, CNN models using the polynomial filters have been proposed~\cite{defferrard2016,henaff2015deep,kipf2016}.
In essence, the Fourier transform of a graph can be computed using the (normalized) graph Laplacian. Let $\mathbf{R}_{r}$ be the graph Laplacian of $\mathcal{G}_{r}$ (i.e., $\mathbf{R}_{r} = \mathbf{I} - \mathbf{D}_{r}^{-1/2}\mathbf{W}_{r}\mathbf{D}_{r}^{-1/2}$ where $\mathbf{D}_{r} = \text{diag}(\mathbf{W}_{r} \mathbf{1}_{n_{2} \times 1})$)~\cite{hammond2011}. Then, the graph Fourier transform $\mathcal{F}_{r}(\mathbf{u})$ of a vertex assigned with the vector $\mathbf{u}$ is defined as 
\begin{align} \label{eq:def_graph Fourier transform}
\mathcal{F}_{r}(\mathbf{u}) 
&= \mathbf{Q}_{r}^{T} \mathbf{u},
\end{align}
where $\mathbf{R}_{r} = \mathbf{Q}_{r} \mathbf{\Lambda}_{r} \mathbf{Q}_{r}^{T}$ is an eigen-decomposition of the graph Laplacian $\mathbf{R}_{r}$~\cite{hammond2011}. Also, the inverse graph Fourier transform $\mathcal{F}_{r}^{-1}(\mathbf{u}^{\prime})$ of $\mathbf{u}'$ is defined as\footnote{One can easily check that $\mathcal{F}_{r}^{-1}(\mathcal{F}_{r}(\mathbf{u})) = \mathbf{u}$ and $\mathcal{F}_{r}(\mathcal{F}_{r}^{-1}(\mathbf{u}')) = \mathbf{u}'$.}
\begin{align} \label{eq:def_graph inverse Fourier transform}
\mathcal{F}_{r}^{-1}(\mathbf{u}')
&= \mathbf{Q}_{r} \mathbf{u}'.
\end{align}
Let $\mathbf{z}$ be the filter used in the convolution, then the output $\Delta \mathbf{u}$ of the graph-based convolution on a vertex assigned with the vector $\mathbf{u}$ is defined as \cite{hammond2011, defferrard2016}
\begin{align} \label{eq:eq101}
\Delta \mathbf{u} &= \mathbf{z}\ast\mathbf{u} = \mathcal{F}_{r}^{-1}(\mathcal{F}_{r}(\mathbf{z}) \odot \mathcal{F}_{r}(\mathbf{u}))
\end{align}
From~\eqref{eq:def_graph Fourier transform} and~\eqref{eq:def_graph inverse Fourier transform},~\eqref{eq:eq101} can be expressed as
\begin{align}
\Delta \mathbf{u} 
&= \mathbf{Q}_r(\mathcal{F}_r(\mathbf{z})\odot \mathbf{Q}_r^T\mathbf{u}) \nonumber \\
&= \mathbf{Q}_r \text{diag}(\mathcal{F}_r(\mathbf{z}))\mathbf{Q}_r^T\mathbf{u} \nonumber \\
&= \mathbf{Q}_r \mathbf{G}\mathbf{Q}_r^T\mathbf{u},
\label{eq:eq105}
\end{align}
where $\mathbf{G} = \text{diag}(\mathcal{F}_r(\mathbf{z}))$ is the matrix of filter parameters defined in the graph Fourier domain.

We next update $\mathbf{U}$ and $\mathbf{V}$ using the feature matrices $\Delta \mathbf{U}$ and $\Delta \mathbf{V}$. In \cite{monti2017}, a cascade of multi-graph CNN followed by long short-term memory
(LSTM) recurrent neural network has been proposed. 
The computational cost of this approach is $\mathcal{O}(r|\Omega| + r^2n_1 + r^2n_2)$ which is much lower than the SVD-based LRMC techniques (i.e., $\mathcal{O}(rn_1n_2)$) as long as $r\ll\min(n_1,n_2)$.
Finally, we compute the loss function $l (\mathbf{U}_{i}, \mathbf{V}_{i})$ in \eqref{eq:eq021} and then update the filter parameters using the backpropagation.
Suppose $\{\mathbf{U}_{i}\}_{i}$ and $\{\mathbf{V}_{i}\}_{i}$ converge to $\widehat{\mathbf{U}}$ and $\widehat{\mathbf{V}}$, respectively, then the estimate of $\mathbf{M}$ obtained by the CNN-based LRMC is $\widehat{\mathbf{M}} = \widehat{\mathbf{U}}\widehat{\mathbf{V}}^T$.

 
\vspace{.25cm}
\subsubsection{Atomic Norm Minimization}
%

In ADMiRA, a low-rank matrix can be represented using a small number of rank-one matrices. Atomic norm minimization (ANM) generalizes this idea for arbitrary data in which the data is represented using a small number of basis elements called \textit{atom}.
Example of ANM include sound navigation ranging systems~\cite{chandrasekaran2012} and line spectral estimation~\cite{bhaskar2013}.
To be specific, let $\mathbf{X}=\sum\limits_{i = 1}^r \alpha_i\mathbf{H}_i$ be a signal with $k$ distinct frequency components $\mathbf{H}_i\in\mathbb{C}^{n_1\times n_2}$. Then the atom is defined as $\mathbf{H}_i = \mathbf{h}_i\mathbf{b}^\ast$ where 
\begin{align}
\mathbf{h}_{i} 
&= [1 \ e^{j2\pi f_i} \ e^{j2\pi f_{i} 2} \ \cdots \ e^{j2 \pi f_{i} (n_{1}-1) }]^{T}
\end{align}
is the steering vector and $\mathbf{b}_i\in\mathbb{C}^{n_2}$ is the vector of normalized coefficients (i.e., $\|\mathbf{b}_i\|_2 = 1$).
We denote the set of such atoms $\mathbf{H}_i$ as $\mathcal{H}$. Using $\mathcal{H}$, the atomic norm of $\mathbf{X}$ is defined as
\begin{equation}
\|\mathbf{X}\|_{\mathcal{H}} = \inf \{\sum\limits_{i}\alpha_i : \mathbf{X} = \sum\limits_{i}\alpha_i \mathbf{H}_i,\: \alpha_i> 0,~\mathbf{H}_i\in\mathcal{H} \} .
\label{eq:eq0122}
\end{equation} 
Note that the atomic norm $\|\mathbf{X}\|_{\mathcal{H}}$ is a generalization of the $\ell_1$-norm and also the nuclear norm to the space of sinusoidal signals~\cite{bhaskar2013,choi2017}. 


\begin{table}[t]
\centering
\caption{Summary of the LRMC algorithms. The rank is $r$ and $n = \max(n_1, n_2)$.}
\label{tab:tab001}
\begin{threeparttable}
\begin{tabular}{|p{1.1cm}|p{1.6cm}|p{1.7cm}|p{4.5cm}|p{2cm}|p{2cm}|}
\hline
Category & Technique & Algorithm & Features &  \parbox{2cm}{\Tstrut Computational \\ Complexity  \Bstrut} &  \parbox{2cm}{\Tstrut Iteration \\ Complexity\tnote{*}  \Bstrut} \\
\hline
\multirow{4}{1cm}{\centering NNM} & { Convex \qquad Optimization } & { SDPT3\qquad (CVX) \cite{toh} }  &  A solver for conic programming problems  & $\mathcal{O}(n^3)$ & $\mathcal{O}(n^\omega\log(\frac{1}{\epsilon}))$\tnote{**}\\
\cline{2-6}
& \multirow{2}{1.6cm}{\Tstrut NNM via Singular Value Thresholding} & SVT \cite{svt}  &  An extension of the iterative soft thresholding technique in compressed sensing for LRMC, based on a Lagrange multiplier method & $\mathcal{O}(rn^2)$ & $\mathcal{O}(\frac{1}{\sqrt{\epsilon}})$\\
\cline{3-6}
&  & NIHT \cite{tanner}  & An extension of the iterative hard thresholding technique \cite{blumensath} in compressed sensing for LRMC & $\mathcal{O}(rn^2)$ & $\mathcal{O}(\log(\frac{1}{\epsilon}))$\\
\cline{2-6}
 & IRLS \qquad Minimization  &  IRLS-M \quad\qquad Algorithm~\cite{fornasier2011}  & An algorithm to solve the NNM problem by computing the solution of a weighted least squares subproblem in each iteration   & $\mathcal{O}(rn^2)$ & $\mathcal{O}(\log(\frac{1}{\epsilon}))$ \\
\cline{1-6}
\multirow{7}{1cm}{\centering FNM with Rank Constraint} &  Greedy\quad  Technique  & ADMiRA \cite{adm}  &  An extension of the greedy algorithm CoSaMP \cite{needell2009,choi2017} in compressed sensing for LRMC, uses greedy projection to identify a set of rank-one matrices that best represents the original matrix  & $\mathcal{O}(rn^2)$ & $\mathcal{O}(\log(\frac{1}{\epsilon}))$\\
\cline{2-6}
& \multirow{2}{1.5cm}{\Tstrut Alternating Minimization} & LMaFit \cite{lmafit} &  A nonlinear successive over-relaxation LRMC algorithm based on nonlinear Gauss-Seidel method  & $\mathcal{O}(r|\Omega|+r^2n)$& $\mathcal{O}(\log(\frac{1}{\epsilon}))$\\
\cline{3-6}
& & ASD \cite{asd}  &  A steepest decent algorithm for the FNM-based LRMC problem (\ref{eq:lsm})  & $\mathcal{O}(r|\Omega|+r^2n)$ & $\mathcal{O}(\log(\frac{1}{\epsilon}))$\\
\cline{2-6}
& \multirow{2}{1.5cm}{\Tstrut Manifold Optimization} 
& SET \cite{dai2010}  &  A gradient-based algorithm to solve the FNM problem on a Grassmann manifold  & $\mathcal{O}(r|\Omega|+r^2n)$ & $\mathcal{O}(\log(\frac{1}{\epsilon}))$\\
\cline{3-6}
& & LRGeomCG \cite{bart2013}  &  A conjugate gradient algorithm over a Riemannian manifold of the fixed-rank matrices & $\mathcal{O}(r|\Omega|+r^2n)$ & $\mathcal{O}(\log(\frac{1}{\epsilon}))$\\
\cline{2-6}
 & \multirow{2}{1.5cm}{Truncated NNM}  &  TNNR-\quad APGL \cite{truncatedNNM}  &  This algorithm solves the truncated NNM problem via accelerated proximal gradient line search method \cite{beck2009}  & $\mathcal{O}(rn^2)$ & $\mathcal{O}(\frac{1}{\sqrt{\epsilon}})$\\
\cline{3-6}
& & TNNR-ADMM \cite{truncatedNNM}  & This algorithm solves the truncated NNM problem via an alternating direction method of multipliers \cite{tao2011}  & $\mathcal{O}(rn^2)$ & $\mathcal{O}(\frac{1}{\sqrt{\epsilon}})$\\
\cline{2-6}
& CNN-based Technique  &  CNN-based LRMC Algorithm~\cite{monti2017}  & An gradient-based algorithm to express a low-rank matrix as a graph structure and then apply CNN to the constructed graph to recover the desired matrix   & $\mathcal{O}(r|\Omega|+r^2n)$ & $\mathcal{O}(\log(\frac{1}{\epsilon}))$\\
\hline
\end{tabular}
\begin{tablenotes}
\item[*] The number of iterations to achieve the reconstructed matrix $\widehat{\mathbf{M}}$ satisfying $\|\widehat{\mathbf{M}} - \mathbf{M}^\ast\|_F\leq \epsilon$ where $\mathbf{M}^\ast$ is the optimal solution.
\item[**] $\omega$ is some positive constant controlling the iteration complexity. 
\end{tablenotes}
\end{threeparttable}
\end{table}

Let $\mathbf{X}_o$ be the observation of $\mathbf{X}$, then the problem to reconstruct $\mathbf{X}$ can be modeled as the ANM problem: 
\begin{equation}
\begin{matrix}
\min\limits_{\mathbf{Z}} & \frac{1}{2}\|\mathbf{Z}-\mathbf{X}_o\|_F + \tau\|\mathbf{Z}\|_{\mathcal{H}}, 
\end{matrix} 
\label{eq:eq008}
\end{equation} 
where $\tau > 0$ is a regularization parameter. By using~\cite[Theorem 1]{li2016off}, we have
\begin{align}
\| \mathbf{Z} \|_{\mathcal{H}}
&= \inf \left \{ \frac{1}{2} \left ( \text{tr}(\mathbf{W}) + \text{tr}(\text{Toep}(\mathbf{u})) \right ): \right . \notag\\ 
& \qquad\qquad \left . \begin{bmatrix}
\mathbf{W} & \mathbf{Z}^{*} \\
\mathbf{Z} & \text{Toep}(\mathbf{u})
\end{bmatrix} \succeq 0 \right \},
\end{align}
and the equivalent expression of the problem (\ref{eq:eq008}) is
\begin{equation} \label{eq:atomic norm minimization problem_SDP formulation}
\begin{split}
&\underset{\mathbf{Z}, \mathbf{W}, \mathbf{u}}{\min} ~~~ \Vert \mathbf{Z}-\mathbf{X}_o \Vert^2_{F} + \tau \left ( \text{tr}(\mathbf{W}) + \text{tr}(\text{Toep}(\mathbf{u})) \right ) \\
&~~\text{s.t.}~~~~ \begin{bmatrix}
\mathbf{W} & \mathbf{Z}^{*} \\
\mathbf{Z} & \text{Toep}(\mathbf{u})
\end{bmatrix} \succeq 0
\end{split}.
\end{equation} 
Note that the problem~\eqref{eq:atomic norm minimization problem_SDP formulation} can be solved via the SDP solver (e.g., SDPT3~\cite{toh}) or greedy algorithms \cite{chen2001,rao2015}.

\vspace{.25cm}
\section{Numerical Evaluation}
%

In this section, we study the performance of the LRMC algorithms. In our experiments, we focus on the algorithm listed in Table \ref{tab:tab001}. The original matrix is generated by the product of two random matrices $\mathbf{A}\in\mathbb{R}^{n_1\times r}$ and $\mathbf{B}\in\mathbb{R}^{n_2\times r}$, i.e., $\mathbf{M} = \mathbf{AB}^T$. Entries of these two matrices, $a_{ij}$ and $b_{pq}$ are identically and independently distributed random variables sampled from the normal distribution $\mathcal{N}(0,1)$. Sampled elements are also chosen at random. The sampling ratio $p$ is defined as
\begin{equation}
p = \frac{|\Omega|}{n_1n_2}\nonumber,
\end{equation}
where $|\Omega|$ is the cardinality (number of elements) of $\Omega$. In the noisy scenario, we use the additive noise model where the observed matrix $\mathbf{M}_o$ is expressed as $\mathbf{M}_o = \mathbf{M} + \mathbf{N}$ where the noise matrix $\mathbf{N}$ is formed by the i.i.d. random entries sampled from the Gaussian distribution $\mathcal{N}(0,\sigma^2)$. For given SNR, $\sigma^2 = \frac{1}{n_1n_2}\|\mathbf{M}\|_F^210^{-\frac{\text{SNR}}{10}}$. Note that the parameters of the LRMC algorithm are chosen from the reference paper. For each point of the algorithm, we run $1,000$ independent trials and then plot the average value.

In the performance evaluation of the LRMC algorithms, we use the mean square error (MSE) and the exact recovery ratio, which are defined, respectively, as
\begin{equation}
\begin{split}
& \text{MSE} = \frac{1}{n_{1}n_{2}}\|\widehat{\mathbf{M}}- \mathbf{M}\|^2_F, \\
& R = \frac{\text{number of successful trials}}{\text{total trials}},
\end{split} \nonumber
\end{equation}
where $\widehat{\mathbf{M}}$ is the reconstructed low-rank matrix. We say the trial is successful if the MSE performance is less than the threshold $\epsilon$. In our experiments, we set $\epsilon = 10^{-6}$. Here, $R$ can be used to represent the probability of successful recovery.
 
We first examine the exact recovery ratio of the LRMC algorithms in terms of the sampling ratio and the rank of $\mathbf{M}$. In our experiments, we set $n_1 = n_2 = 100$ and compute the phase transition \cite{donoho2009} of the LRMC algorithms. 
Note that the phase transition is a contour plot of the success probability $P$ (we set $P = 0.5$) where the sampling ratio ($x$-axis) and the rank ($y$-axis) form a regular grid of the $x$-$y$ plane. The contour plot separates the plane into two areas: the area above the curve is one satisfying $P < 0.5$ and the area below the curve is a region achieving $P > 0.5$~\cite{donoho2009} (see Fig. \ref{fig:fig002}). 
The higher the curve, therefore, the better the algorithm would be. In general, the LRMC algorithms perform poor when the matrix has a small number of observed entries and the rank is large. Overall, NNM-based algorithms perform better than FNM-based algorithms. In particular, the NNM technique using SDPT3 solver outperforms the rest because the convex optimization technique always finds a global optimum while other techniques often converge to local optimum.          


\begin{figure*} [t!]
\label{fig:fig002}
\centering    
\includegraphics[width=80mm,height=75mm]{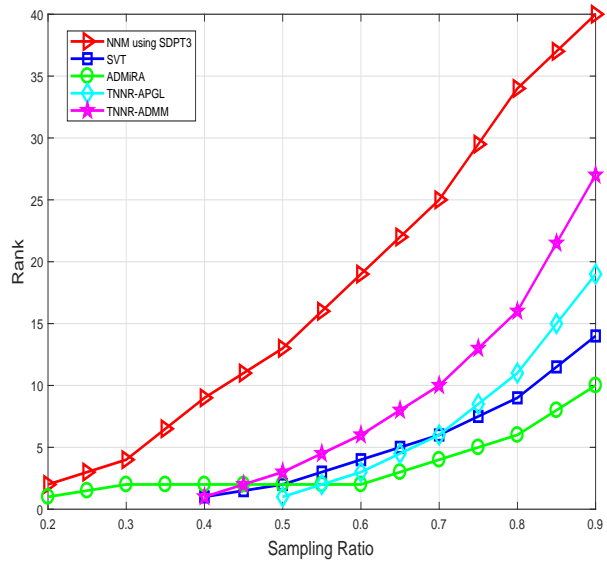}  
\caption {Phase transition of LRMC algorithms.} 
\end{figure*}

In order to investigate the computational efficiency of LRMC algorithms, we measure the running time of each algorithm as a function of rank (see Fig. \ref{fig:fig003}). The running time is measured in second, using a 64-bit PC with an Intel i5-4670 CPU running at 3.4 GHz. We observe that the convex algorithms have a relatively high running time complexity.



\begin{figure*} [t!]
\label{fig:fig003}
\centering    
\includegraphics[width=80mm,height=75mm]{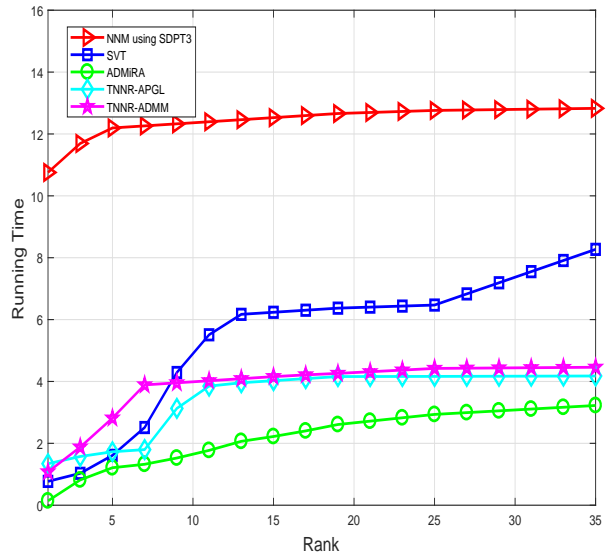}  
\caption {Running times of LRMC algorithms in noiseless scenario (40\% of entries are observed).} 
\end{figure*}


We next examine the efficiency of the LRMC algorithms for different problem size (see Table \ref{tab:tab002}). For iterative LRMC algorithms, we set the maximum number of iteration to 300. We see that LRMC algorithms such as SVT, IRLS-M, ASD, ADMiRA, and LRGeomCG run fast. For example, it takes less than a minute for these algorithms to reconstruct $1000\times 1000$ matrix, while the running time of SDPT3 solver is more than 5 minutes. Further reduction of the running time can be achieved using the alternating projection-based algorithms such as LMaFit. For example, it takes about one second to reconstruct an $(1000\times 1000)$-dimensional matrix with rank $5$ using LMaFit. Therefore, when the exact recovery of the original matrix is unnecessary, the FNM-based technique would be a good choice.  

\begin{table}[t]
\centering
\caption{MSE results for different problem sizes where $\text{rank}(\mathbf{M}) = 5$, and $p = 2\times \text{DOF}$}
\label{tab:tab002}
\begin{tabular}{|p{2.4cm}|p{1cm}|p{1cm}|p{1.3cm}|p{1cm}|p{1cm}|p{1.3cm}|p{1cm}|p{1cm}|p{1.3cm}|}
\hline
\multirow{2}{2cm}{} & \multicolumn{3}{c|}{\centering $n_1 = n_2 = 50$} & \multicolumn{3}{c|}{\centering $n_1 = n_2 = 500$ } & \multicolumn{3}{c|}{\centering $n_1 = n_2 = 1000$} \\
\cline{2-10}
& MSE & Running Time (s) & Number of Iterations & MSE  & Running Time (s) & Number of Iterations & MSE  &  Running Time (s) & Number of Iterations \\
\hline
NNM using SDPT3  	& 0.0072 & 0.6 	 & 13  & 0.0017 & 74  & 16  & 0.0010 & 354 & 16\\
\hline
SVT  								& 0.0154 & 0.4 	 & 300 & 0.4564 & 10  & 300 & 0.2110 & 32  & 300\\
\hline
NIHT 								& 0.0008 & 0.2  	 & 253 & 0.0039 & 21  & 300 & 0.0019 & 93  & 300\\
\hline
IRLS-M 								& 0.0009 & 0.2 	 & 60  & 0.0033 & 2   & 60  & 0.0025 & 8   & 60 \\
\hline
ADMiRA 								& 0.0075 & 0.3 	 & 300 & 0.0029 & 49  & 300 & 0.0016 & 52 & 300\\
\hline
ASD 								& 0.0003 & $10^{-2}$ 	 & 227 & 0.0006 & 2   & 300 & 0.0005 & 8   & 300\\
\hline
LMaFit 								& 0.0002 & $10^{-2}$ & 241 & 0.0002 & 0.5 & 300 & 0.0500 & 1   & 300\\
\hline
SET 								& 0.0678 & 11 	 & 9   & 0.0260 & 136 & 8   & 0.0108 & 270 & 8\\
\hline
LRGeomCG 							& 0.0287 & 0.1 	 & 108  & 0.0333 & 12  & 300 & 0.0165 & 40  & 300\\
\hline
TNNR-ADMM 							& 0.0221 & 0.3 	 & 300 & 0.0042 & 22  & 300 & 0.0021 & 94  & 300\\
\hline
TNNR-APGL  							& 0.0055 & 0.3 	 & 300 & 0.0011 & 21  & 300 & 0.0009 & 95  & 300\\
\hline
\end{tabular}
\end{table}

In the noisy scenario, we also observe that FNM-based algorithms perform well (see Fig. \ref{fig:fig005} and Fig. \ref{fig:fig0055}). In this experiment, we compute the MSE of LRMC algorithms against the rank of the original low-rank matrix for different setting of SNR (i.e., SNR = 20 and 50 dB).  
We observe that in the low and mid SNR regime (e.g., SNR = 20 dB), TNNR-ADMM performs comparable to the NNM-based algorithms since the FNM-based cost function is robust to the noise. In the high SNR regime (e.g., SNR = 50 dB), the convex algorithm (NNM with SDPT3) exhibits the best performance in term of the MSE. The performance of TNNR-ADMM is notably better than that of the rest of LRMC algorithms. For example, given $\text{rank}(\mathbf{M}) = 20$, the MSE of TNNR-ADMM is around 0.04, while the MSE of the rest is higher than $1$.



\begin{figure*} [t!]
\label{fig:fig005}
\centering    
\includegraphics[width=80mm,height=75mm]{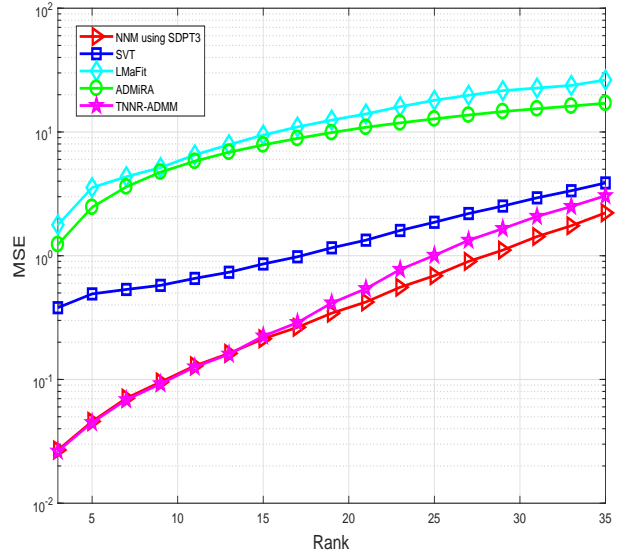}   
\caption {MSE performance of LRMC algorithms in noisy scenario with SNR = 20 dB (70\% of entries are observed).} 
\end{figure*}


\begin{figure*} [t!]
\label{fig:fig0055}
\centering    
\includegraphics[width=80mm,height=75mm]{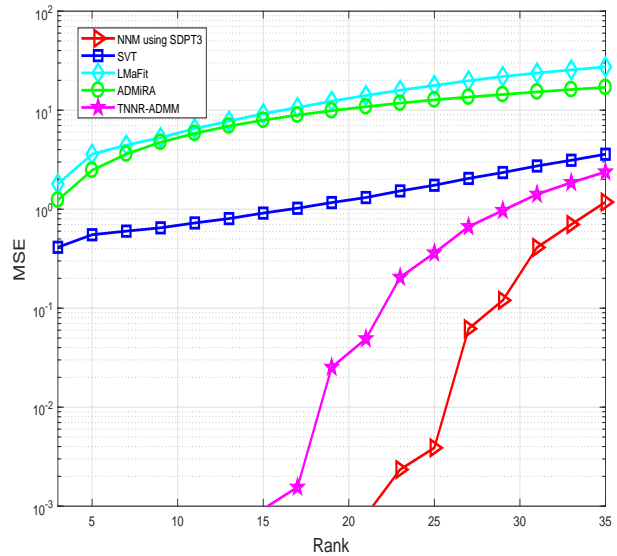}    
\caption {MSE performance of LRMC algorithms in noisy scenario with SNR = 50 dB (70\% of entries are observed).} 
\end{figure*}

Finally, we apply LRMC techniques to recover images corrupted by impulse noise. In this experiment, we use $256 \times 256$ standard grayscale images (e.g., boat, cameraman, lena, and pepper images) and the salt-and-pepper noise model with different noise density $\rho = 0.3, 0.5,\text{ and } 0.7$. For the FNM-based LRMC techniques, the rank is given by the number of the singular values $\sigma_i$ being greater than a relative threshold $\epsilon > 0$, i.e., $\sigma_i > \epsilon\max\limits_i\sigma_i$. From the simulation results, we observe that peak SNR (pSNR), defined as the ratio of the maximum pixel value of the image to noise variance, of all LRMC techniques is at least $52$dB when $\rho = 0.3$ (see Table \ref{tab:tab004}). In particular, NNM using SDPT3, SVT, and IRLS-M outperform the rest, achieving pSNR$ \geq 57$ dB even with high noise level $\rho = 0.7$.

\begin{table}[t]
\centering
\caption{Image recovery via LRMC for different noise levels $\rho$.}
\label{tab:tab004}
\begin{tabular}{|p{1.8cm}|p{1cm}|p{1cm}|p{1cm}|p{1cm}|p{1cm}|p{1cm}|p{1cm}|p{1cm}|p{1cm}|}
\hline
\multirow{2}{2cm}{} & \multicolumn{3}{c|}{\centering $\rho = 0.3$} & \multicolumn{3}{c|}{\centering $\rho = 0.5$} & \multicolumn{3}{c|}{\centering $\rho = 0.7$ }  \\
\cline{2-10}
& pSNR (dB)  & Running Time (s) & Iteration & pSNR (dB)  & Running Time (s) & Iteration & pSNR (dB)  & Running Time (s) & Iteration\\
\hline
NNM \qquad\quad using SDPT3	& 66   	& 1801 & 14 	& 59   	& 883  & 14	 & 58  & 292  & 15\\
\hline
SVT 						& 61   	& 18   & 300 	& 59   	& 13   & 300 & 57  & 32  & 300\\
\hline
NIHT 						& 58    & 16   & 300 	& 54   	& 6    & 154 & 53  & 2 & 35\\
\hline
IRLS-M 						& 68    & 87   & 60  	& 63   	& 43   & 60  & 59  & 15   & 60\\
\hline
ADMiRA 						& 57   	& 1391 & 300 	& 54   	& 423  & 245 & 53  & 210  & 66\\
\hline
ASD 						& 57   	& 3    & 300 	& 55   	& 4    & 300 & 53  & 2 & 101\\
\hline
LMaFit 						& 58   	& 2    & 300 	& 55   	& 1    & 123 & 53  & 0.3 & 34\\
\hline
SET 						& 61   	& 716  & 6 		& 55   	& 321  & 4 	 & 53  & 5   & 2\\
\hline
LRGeomCG 					& 52   	& 47   & 300 	& 48    & 18   & 75  & 44  & 5 & 21\\
\hline
TNNR-ADMM 					& 57   	& 15   & 300 	& 54   	& 18   & 300 & 53  & 18   & 300\\
\hline
TNNR-APGL 					& 56   	& 14   & 300 	& 56   	& 19   & 300 & 53  & 17   & 300\\
\hline
\end{tabular}
\end{table}

\section{Concluding Remarks}
\label{sec:conclusion}
In this paper, we presented a contemporary survey of LRMC techniques. Firstly, we classified state-of-the-art LRMC techniques into two main categories based on the availability of the rank information. Specifically, when the rank of a desired matrix is unknown, we formulated the LRMC problem as the NNM problem and discussed several NNM-based LRMC techniques such as SDP-based NNM, SVT, and truncated NNM. When the rank of an original matrix is known a priori, the LRMC problem can be modeled as the FNM problem. We discussed various FNM-based LRMC techniques (e.g., greedy algorithms, alternating projection methods, and optimization over Riemannian manifold). Secondly, we discussed fundamental issues and principles that one needs to be aware of when solving the LRMC problem. Specifically, we discussed two key properties, sparsity of observed entries and the incoherence of an original matrix, characterizing the LRMC problem. We also explained how to exploit the special structure of the desired matrix (e.g., PSD, Euclidean distance, and graph structures) in the LRMC algorithm design. Finally, we compared the performance of LRMC techniques via numerical simulations and provided the running time and computational complexity of each technique.

When one tries to use the LRMC techniques, a natural question one might ask is what algorithm should one choose? While this question is in general difficult to answer, one can consider the SDP-based NNM technique when the accuracy of a recovered matrix is critical (see Fig.~\ref{fig:fig002},~\ref{fig:fig005}, and~\ref{fig:fig0055}). Another important point that one should consider in the choice of the LRMC algorithm is the computational complexity. In many practical applications, dimension of a matrix to be recovered is large, and in fact in the order of hundred or thousand. In such large-scale problems, algorithms such as LMaFit and LRGeomCG would be a good option since their computational complexity scales linearly with the number of observed entries $\mathcal{O}(r|\Omega|)$ while the complexity of SDP-based NNM is expressed as $\mathcal{O}(n^{3})$ (see Table~\ref{tab:tab001}). In general, there is a trade-off between the running time and the recovery performance. In fact, FNM-based LRMC algorithms such as LMaFit and ADMiRA converge much faster than the convex optimization based algorithms (see Table~\ref{tab:tab002}), but the NNM-based LRMC algorithms are more reliable than FNM-based LRMC algorithms (see Fig.~\ref{fig:fig002} and~\ref{fig:fig0055}). So, one should consider the given setup and operating condition to obtain the best trade-off between the complexity and the performance.

We conclude the paper by providing some of future research directions.
\begin{itemize}
\item When the dimension of a low-rank matrix increases and thus computational complexity increases significantly, we want an algorithm with good recovery guarantee yet its complexity scales linearly with the problem size. Without doubt, in the real-time applications such as IoT localization and massive MIMO, low-complexity and short running time are of great importance. Development of implementation-friendly algorithm and architecture would accelerate the dissemination of LRMC techniques.


\item Most of the LRMC techniques assume that the original low-rank matrix is a random matrix whose entries are randomly generated. In many practical situations, however, entries of the matrix are not purely random but chosen from a finite set of integer numbers. In the recommendation system, for example, each entry (rating for a product) is chosen from integer value (e.g., $1\sim 5$ scale). Unfortunately, there is no well-known practical guideline and efficient algorithm when entries of a matrix are chosen from the discrete set. It would be useful to come up with a simple and effective LRMC technique suited for such applications.  

\item As mentioned, CNN-based LRMC technique is a useful tool to reconstruct a low-rank matrix. In essence, unknown entries of a low-rank matrix are recovered based on the graph model of the matrix. Since observed entries can be considered as labeled training data, this approach can be classified as a supervised learning. In many practical scenarios, however, it might not be easy to precisely express the graph model of the matrix since there are various criteria to define the graph edge. In addressing this problem, new deep learning technique such as the generative adversarial networks (GAN) \cite{goodfellow2016} consisting of the generator and discriminator would be useful.    
\end{itemize}

\appendices

\section{Proof of the SDP form of NNM} \label{app:appA}
\proof
We recall that the standard form of an SDP is expressed as
\begin{equation}
\begin{split}
\min\limits_{\mathbf{Y}} &~~~~~ \langle \mathbf{C},\mathbf{Y} \rangle \\
\text{subject to} &~~~~~ \langle \mathbf{A}_k,\mathbf{Y} \rangle \leq b_k,~~  k=1, \cdots, l\\
&~~~~~ \mathbf{Y}\succeq 0
\end{split}
\label{eq:eqSDP_001}
\end{equation}
where $\mathbf{C}$ is a given matrix, and $\{ \mathbf{A}_{k} \}_{k=1}^{l}$ and $\{ b_{k} \}_{k=1}^{l}$ are given sequences of matrices and constants, respectively.
To convert the NNM problem in (\ref{eq:nnm}) into the standard SDP form in~\eqref{eq:eqSDP_001}, we need a few steps. First, we convert the NNM problem in~\eqref{eq:nnm} into the epigraph form:\footnote{Note that $\underset{\mathbf{X}}{\min} \| \mathbf{X} \|_{*} = \underset{\mathbf{X}}{\min} \left \lbrace \underset{t: \| \mathbf{X} \|_{*} \le t}{\min} t \right \rbrace = \underset{(\mathbf{X}, t): \| \mathbf{X} \|_{*} \le t}{\min} t$.}
\begin{equation}\label{eq:eqSDP_002}
\begin{split}
\min\limits_{\mathbf{X}, t} &~~~~~t \\
\text{subject to}  &~~~~~\| \mathbf{X} \|_{*} \le t, \\
&~~~~~P_{\Omega}(\mathbf{X}) = P_{\Omega}(\mathbf{M}).
\end{split}
\end{equation}

Next, we transform the constraints in~\eqref{eq:eqSDP_002} to generate the standard form in~\eqref{eq:eqSDP_001}. We first consider the inequality constraint ($\| \mathbf{X} \|_{*} \le t$). Note that $\| \mathbf{X} \|_{*} \le t$ if and only if there are symmetric matrices $\mathbf{W}_{1} \in \mathbb{R}^{n_{1} \times n_{1}}$ and $\mathbf{W}_{2} \in \mathbb{R}^{n_{2} \times n_{2}}$ such that~\cite[Lemma 2]{fazel} 
\begin{align}
\text{tr} (\mathbf{W}_{1}) + \text{tr} (\mathbf{W}_{2}) \le 2t
~~\text{and}~~
\begin{bmatrix}
\mathbf{W}_{1} & \mathbf{X} \\
\mathbf{X}^{T} & \mathbf{W}_{2}
\end{bmatrix} \succeq 0.
\end{align}
Then, by denoting $\mathbf{Y} = \begin{bmatrix}
\mathbf{W}_{1} & \mathbf{X} \\
\mathbf{X}^{T} & \mathbf{W}_{2}
\end{bmatrix} \in \mathbb{R}^{(n_{1} + n_{2}) \times (n_{1} + n_{2})}$ and $\widetilde{\mathbf{M}} = \begin{bmatrix}
\mathbf{0}_{n_{1} \times n_{1}} & \mathbf{M} \\
\mathbf{M}^{T} & \mathbf{0}_{n_{2} \times n_{2}}
\end{bmatrix}$ where $\mathbf{0}_{s \times t}$ is the $(s \times t)$-dimensional zero matrix, the problem in~\eqref{eq:eqSDP_002} can be reformulated as
\begin{equation}\label{eq:eqSDP_003}
\begin{split}
\underset{\mathbf{Y}, t}{\min} &~~~~~~ 2t \\
\text{subject to}  &~~~~~~\text{tr} (\mathbf{Y}) \le 2t, \\
&~~~~~~ \mathbf{Y} \succeq 0, \\
&~~~~~~ P_{\widetilde{\Omega}}(\mathbf{Y}) = P_{\widetilde{\Omega}} ( \widetilde{\mathbf{M}} ),
\end{split}
\end{equation}
where $P_{\widetilde{\Omega}}(\mathbf{Y}) = \begin{bmatrix}
\mathbf{0}_{n_{1} \times n_{1}} & P_{\Omega} (\mathbf{X}) \\
(P_{\Omega}(\mathbf{X}))^{T} & \mathbf{0}_{n_{2} \times n_{2}}
\end{bmatrix}$ is the extended sampling operator. We now consider the equality constraint ($P_{\widetilde{\Omega}}(\mathbf{Y}) = P_{\widetilde{\Omega}}(\widetilde{\mathbf{M}})$) in~\eqref{eq:eqSDP_003}. One can easily see that this condition is equivalent to 
\begin{align}
\langle \mathbf{Y}, \mathbf{e}_{i} \mathbf{e}_{j + n_{1}}^{T} \rangle 
&= \langle \widetilde{\mathbf{M}}, \mathbf{e}_{i} \mathbf{e}_{j+n_{1}}^{T} \rangle,
~~~ (i, j) \in \Omega, \label{eq:equality constraint_SDP_1}
\end{align}
where $\{ \mathbf{e}_{1}, \cdots, \mathbf{e}_{n_{1} + n_{2}} \}$ be the standard ordered basis of $\mathbb{R}^{n_{1} + n_{2}}$. Let $\mathbf{A}_k = \mathbf{e}_i \mathbf{e}_{j+n_{1}}^T$ and $\langle \widetilde{\mathbf{M}}, \mathbf{e}_i \mathbf{e}_{j+n_{1}}^T \rangle = b_k$ for each of $(i,j)\in\Omega$. Then,
\begin{align} \label{eq:equality constraint_SDP}
\langle \mathbf{Y} , \mathbf{A}_k \rangle = b_k,~~~ k = 1, \cdots , | \Omega |,
\end{align}
and thus~\eqref{eq:eqSDP_003} can be reformulated as
\begin{equation} \label{eq:eqSDP_004}
\begin{split}
\min\limits_{\mathbf{Y},t} &~~~~~ 2t \\
\text{subject to} &~~~~~ \text{tr}(\mathbf{Y}) \leq 2t  \\
&~~~~~ \mathbf{Y} \succeq 0  \\
&~~~~~ \langle \mathbf{Y}, \mathbf{A}_{k} \rangle = b_{k},~~ k = 1, \cdots, |\Omega|.
\end{split}
\end{equation}
For example, we consider the case where the desired matrix $\mathbf{M}$ is given by $\mathbf{M}= \begin{bmatrix}
1 & 2 \\ 
2 & 4
\end{bmatrix}$ and the index set of observed entries is $\Omega = \{(2,1),(2,2)\}$. In this case, 
\begin{align}
\mathbf{A}_{1}
= \mathbf{e}_{2} \mathbf{e}_{3}^{T},~
\mathbf{A}_{2}
= \mathbf{e}_{2} \mathbf{e}_{4}^{T},~
b_{1} = 2,~
\text{and}~ b_{2} = 4.
\end{align}
One can express~\eqref{eq:eqSDP_004} in a concise form as \eqref{eq:eqSDP_final}, which is the desired result.

\endproof

\section{Performance guarantee of NNM} \label{app:appB}

\proof[Sketch of proof]
Exact recovery of the desired low-rank matrix $\mathbf{M}$ can be guaranteed under the uniqueness condition of the NNM problem~\cite{candes_recht,candes_tao,recht}. To be specific, let $\mathbf{M} = \mathbf{U}\boldsymbol\Sigma\mathbf{V}^T$ be the SVD of $\mathbf{M}$ where $\mathbf{U}\in\mathbb{R}^{n_1\times r}$, $\boldsymbol\Sigma\in\mathbb{R}^{r\times r}$, and $\mathbf{V}\in\mathbb{R}^{n_2\times r}$. Also, let $\mathbb{R}^{n_1\times n_2} = T\bigoplus T^\perp$ be the orthogonal decomposition in which $T^\perp$ is defined as the subspace of matrices whose row and column spaces are orthogonal to the row and column spaces of $\mathbf{M}$, respectively. Here, $T$ is the orthogonal complement of $T^\perp$. It has been shown that $\mathbf{M}$ is the unique solution of the NNM problem if the following conditions hold true~\cite[Lemma 3.1]{candes_recht}:
\begin{itemize}
\item[1)] there exists a matrix $\mathbf{Y} = \mathbf{UV}^T + \mathbf{W}$ such that $P_\Omega(\mathbf{Y}) = \mathbf{Y}$, $\mathbf{W}\in T^{\perp}$, and $\|\mathbf{W}\| < 1$,
\item[2)] the restriction of the sampling operation $P_\Omega$ to $T$ is an injective (one-to-one) mapping.
\end{itemize}
The establishment of $\mathbf{Y}$ obeying 1) and 2) are in turn conditioned on the observation model of $\mathbf{M}$ and its intrinsic coherence property.

Under a uniform sampling model of $\mathbf{M}$, suppose the coherence property of $\mathbf{M}$ satisfies 
\begin{subequations}
\begin{align}
& \max(\mu(\mathbf{U}),\mu(\mathbf{V}))  \leq \mu_0,\\
& \max\limits_{ij}|e_{ij}|  \leq \mu_1\sqrt{\frac{r}{n_1n_2}},
\end{align}
\end{subequations}
where $\mu_0$ and $\mu_1$ are some constants, $e_{ij}$ is the entry of $\mathbf{E} = \mathbf{UV}^T$, and $\mu(\mathbf{U})$ and $\mu(\mathbf{V})$ are the coherences of the column and row spaces of $\mathbf{M}$, respectively.
\begin{theorem}[{\cite[Theorem 1.3]{candes_recht}}]
There exists constants $\alpha$ and $\beta$ such that if the number of observed entries $m = |\Omega|$ satisfies
\begin{align}
m \geq \alpha\max(\mu_1^2, \mu_0^{\frac{1}{2}}\mu_1,\mu_0n^{\frac{1}{4}})\gamma nr\log n
\label{eq:eq209}
\end{align}
where $\gamma > 2$ is some constant and $n_1 = n_2 = n$, then $\mathbf{M}$ is the unique solution of the NNM problem with probability at least $1 - \beta n^{-\gamma}$. Further, if $r\leq \mu_0^{-1}n^{1/5}$, \eqref{eq:eq209} can be improved to $m\geq C\mu_0\gamma n^{1.2}r\log n$ with the same success probability. 
\end{theorem}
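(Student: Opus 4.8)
The plan is to verify that the two sufficient conditions quoted from \cite[Lemma 3.1]{candes_recht} both hold, simultaneously, with probability at least $1-\beta n^{-\gamma}$ under the stated uniform-sampling and coherence assumptions. Write $P_T$ and $P_{T^\perp}$ for the orthogonal projections onto $T$ and $T^\perp$, and model $\Omega$ as a collection of i.i.d.\ uniformly chosen index pairs, so that $\mathbb{E}[P_\Omega] = \tfrac{m}{n^2}\,\mathrm{Id}$ (the Bernoulli and sampling-with-replacement models used in the probabilistic analysis are interchangeable up to standard reductions). The entire argument hinges on controlling the random operator $P_T P_\Omega P_T$ acting on the space $\mathbb{R}^{n\times n}$ of matrices.

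First I would establish condition~2), injectivity of $P_\Omega$ restricted to $T$, through the near-isometry estimate
\[
\left\| P_T P_\Omega P_T - \tfrac{m}{n^2}\, P_T \right\| \leq \epsilon\, \tfrac{m}{n^2},
\]
(spectral/operator norm), valid with high probability once $m \gtrsim \mu_0 n r \log n$. The coherence hypothesis $\max(\mu(\mathbf{U}),\mu(\mathbf{V})) \le \mu_0$ is precisely what bounds the per-sample variance, so a Bernstein- or moment-type bound for a sum of independent rank-one-type operators yields the estimate. This simultaneously shows that $P_T P_\Omega P_T$ is invertible on $T$ (hence $P_\Omega$ is one-to-one on $T$) and that $\|(P_T P_\Omega P_T)^{-1}\| \le \tfrac{n^2}{m}(1-\epsilon)^{-1}$ as an operator on $T$.

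Next I would construct the dual certificate demanded by condition~1). Using invertibility from the previous step, set $\mathbf{Y} = P_\Omega\,(P_T P_\Omega P_T)^{-1}(\mathbf{U}\mathbf{V}^T)$, which is supported on $\Omega$ (so $P_\Omega(\mathbf{Y})=\mathbf{Y}$) and satisfies $P_T(\mathbf{Y}) = \mathbf{U}\mathbf{V}^T$; thus $\mathbf{W} = P_{T^\perp}(\mathbf{Y}) \in T^\perp$ and $\mathbf{Y} = \mathbf{U}\mathbf{V}^T + \mathbf{W}$ as required. It then remains to verify $\|\mathbf{W}\| = \|P_{T^\perp}(\mathbf{Y})\| < 1$. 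Expanding $(P_T P_\Omega P_T)^{-1}$ as a Neumann series and applying $P_{T^\perp}$ reduces this to bounding the spectral norm of a sum of independent random matrices built from $\mathbf{E}=\mathbf{U}\mathbf{V}^T$ and the sampled entries.

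The hardest part is this final spectral-norm bound on $P_{T^\perp}(\mathbf{Y})$. Controlling $\mathbb{E}\|\cdot\|^p$ for the relevant sum requires a delicate moment computation — a combinatorial accounting of the terms produced by the expanded product — into which the second coherence bound $\max_{ij}|e_{ij}| \le \mu_1\sqrt{r/(n_1 n_2)}$ enters to control the entrywise magnitude of $\mathbf{E}$. The three regimes appearing in $\max(\mu_1^2,\mu_0^{1/2}\mu_1,\mu_0 n^{1/4})$ arise exactly from balancing the dominant contributions in this moment bound against the target deviation probability $\beta n^{-\gamma}$, which fixes the multiplicative $\gamma n r \log n$ factor. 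Finally, the sharpened exponent follows from a tighter estimate of the same moments: when the additional hypothesis $r \le \mu_0^{-1} n^{1/5}$ holds, the $\mu_0 n^{1/4}$ term dominates the maximum and the requirement collapses to $m \geq C\mu_0 \gamma n^{1.2} r \log n$ with the same success probability.
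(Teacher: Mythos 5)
Your proposal follows essentially the same route as the paper: the paper's Appendix B is itself only a sketch that states the two sufficient conditions from \cite[Lemma 3.1]{candes_recht} (existence of a dual certificate $\mathbf{Y}=\mathbf{UV}^T+\mathbf{W}$ supported on $\Omega$ with $\|\mathbf{W}\|<1$, and injectivity of $P_\Omega$ on $T$) and then defers to Cand\`es--Recht, whereas you additionally outline how those conditions are actually verified there (near-isometry of $P_TP_\Omega P_T$, the certificate $\mathbf{Y}=P_\Omega(P_TP_\Omega P_T)^{-1}(\mathbf{UV}^T)$, and the moment bounds on $\|P_{T^\perp}(\mathbf{Y})\|$), which is faithful to the cited source. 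One small inaccuracy: your explanation of the improved bound is arithmetically inconsistent, since if the $\mu_0 n^{1/4}$ term dominated the maximum the requirement would read $m\gtrsim \mu_0 n^{5/4}r\log n$, not $n^{1.2}=n^{6/5}$; the sharpened exponent under $r\leq\mu_0^{-1}n^{1/5}$ comes from a genuinely tighter moment estimate exploiting the small-rank hypothesis, not from one branch of the maximum taking over.
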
 
One direct interpretation of this theorem is that the desired low-rank matrix can be reconstructed exactly using NNM with overwhelming probability even when $m$ is much less than $n_1n_2$.


%
\endproof


\end{document}